\newtheorem{theorem}{Theorem}[section]
\newtheorem{lemma}[theorem]{Lemma}
\newtheorem{corollary}[theorem]{Corollary}
\newtheorem{claim}[theorem]{Claim}
\def\calF{\mathcal{F}}
\def\rev{\textsc{Rev}} \def\drev{\textsc{DRev}}
\begin{document}
\title{On the Complexity of Simple and Optimal Deterministic Mechanisms for an Additive Buyer}

\author{Xi Chen\thanks{Columbia University. Email: \texttt{xichen@cs.columbia.edu}.
Research supported by NSF CCF-1149257 and CCF-1423100.}
\and George Matikas\thanks{Columbia University. Email: \texttt{matikas@cs.columbia.edu}.
Research supported by NSF CCF-1423100.}
\and Dimitris Paparas\thanks{University of Wisconsin-Madison. Email: \texttt{paparas@cs.wisc.edu}.
Research supported by NSF CCF-1320654 and CCF-1617505.}
\and Mihalis Yannakakis\thanks{Columbia University. Email: \texttt{mihalis@cs.columbia.edu}.
Research supported by NSF CCF-1320654 and CCF-1423100.}}
\date{}


\maketitle

\begin{abstract}
We show that the 
  {Revenue-Optimal Deterministic Mechanism Design problem}
for a single additive buyer is \#P-hard, even when the distributions have support size $2$ for each item and, more importantly, even when the optimal solution is guaranteed to be of a very simple kind: the seller picks a price for each individual item and a price for the grand bundle of all the items; the buyer can purchase either the grand bundle at its given price or any subset of items at their total individual prices. The following problems are also \#P-hard, {as immediate corollaries of the proof}:

\begin{enumerate}
\item determining if individual item pricing is optimal for a given instance,\vspace{-0.03cm}
\item determining if grand bundle pricing is optimal, and\vspace{-0.03cm}
\item computing the optimal (deterministic) revenue.
\end{enumerate}

On the positive side, we show that when the distributions are i.i.d. with support size $2$, the optimal revenue obtainable by any mechanism, even a randomized one, can be achieved by a simple solution of the above kind (individual item pricing with a discounted price for the grand bundle) and furthermore, it can be computed in polynomial time. 
{The problem can be solved in polynomial time too when the number of items is constant.}
\end{abstract}
\newpage


\def\vv{\mathbf{v}} \def\xx{\mathbf{x}} \def\srev{\textsc{SRev}\xspace} \def\brev{\textsc{BRev}\xspace}

\def\sprev{\textsc{SPRev}\xspace}

\def\sbrev{\textsc{SBRev}\xspace}

\section{Introduction}

{Consider the following natural scenario: A customer walks in a grocery store with the intention of buying some items. The store owner has statistical information from past customers that reveals how much 
a typical customer values each 
item. Her goal is to assign prices for the items and offer discounts for bundles of them to encourage 
the customer to spend more money in a way that maximizes her expected revenue.

In this paper we formally study practices like the above, 
which we refer to as the \emph{optimal bundle-pricing} problem,
under the setting where a single \emph{additive} buyer is interested in $n$ \emph{heterogeneous} items offered by a seller.}
While the buyer's values for the items are unknown, the seller 
  {is given as input} a product distribution $\calF=\calF_1\times \cdots\times \calF_n$ from which the valuations $\vv=(v_1,\ldots,v_n)$ of the buyer for the $n$ items are drawn, where each $\calF_i$ is a discrete distribution given explicitly (by listing its support and probabilities).
The seller offers a finite \emph{menu} $M$ of \emph{bundles} to the buyer 
  {(or a \emph{bundle-pricing})}, 
  with each entry of the menu consisting of a subset (bundle) 
  $T\subseteq [n]$ of items and the price $\pi_T$ at which it is sold.
Given {a menu} $M$, the buyer {draws her valuations $\vv$ from $\calF$}
  and then either buys a bundle $T$ from $M$ that maximizes her utility
  $\sum_{i\in T} v_i-\pi_T$ or nothing if the utility of every bundle in $M$ is negative\footnote{Ties in utility are broken in favor of a bundle with higher price (and value).}; 
  the price $\pi_T$ of the bundle bought by the buyer is the \emph{revenue} of the seller.
The goal of the seller is to find a menu that maximizes her
  expected revenue (i.e., the expected price $\pi_T$ that the buyer pays), which is known to be equivalent to the problem of 
  \emph{Revenue Optimal Deterministic Mechanism Design}. 
{{If we extend bundle-pricings 
  to allow the seller to offer 
  a finite menu of lotteries (or a \emph{lottery-pricing}), where 
  a lottery is a pair $((x_1,\ldots,x_n),\pi)$ with $\pi$ being its price
  and  $x_i\in [0,1]$ being the probability of the buyer getting  
  item $i\in [n]$ if this lottery is purchased,
 we obtain the \emph{optimal lottery-pricing problem}, also known 
  in the literature as the problem of \emph{Revenue Optimal
{Randomized Mechanism Design}}.}}  
  We define the two problems formally in Section \ref{sec:preliminaries}.\footnote{{We choose to follow the language of bundle-pricings and lottery-pricings in this paper, instead of
  deterministic and randomized mechanisms, mainly because they are conceptually
  closer to common practices seen in the real world and are easier to 
  understand for readers that are not familiar with mechanism design (e.g., there is no need to
  introduce the notion of truthfulness).}}

It is worth pointing out that bundle-pricing schemes commonly used in
  practice do not necessarily~list explicitly the bundles offered in the menu,
   but may specify them implicitly in a succinct manner.  
   For example, in the case of offering a simple \emph{item-pricing} without any discounts for bundles,
   the seller needs only to specify a price for each item ($n$ numbers in total); the induced menu
   consists of all $2^n$ subsets of items, each priced at the sum of 
   prices of items in the subset, and has the desired property that the buyer's problem,
   i.e., finding an optimal bundle in the menu given $\vv$, is easy to solve.
{Because of this, it would not be appropriate to require the output of
  the bundle-pricing problem
  to be an explicit list of bundles in an optimal menu, but rather
  it can be represented in a reasonably succinct way.
The exact representation of the output, however, will not affect our main results, as we explain when we describe them later
in this Section.}

Both optimal 
{deterministic and randomized mechanism design problems}
have been studied intensively
  during the past decade \cite{Than,GHK+05,Briest08,ChK07,ChHMS10,CD11,ChMS10,BCKW10,pavlov,WangT14,Bab14,DDT12a,DDT12b,DDT13,DDT15,HN12,ManelliV06,Rub16,Li13,Yao15}. 
{For some instances, randomized mechanisms can achieve strictly higher revenue than deterministic mechanisms. However, deterministic mechanisms (bundle-pricings) are much more widely used in practice (especially ``simple'' pricing schemes); we will focus on deterministic mechanisms in this paper.}
Recently, much effort has been devoted to understanding the power and limitations
  of simple pricing schemes, that is, menus~that can be described succinctly in a natural way
  {and at the same time induce an easy-to-solve buyer's problem}.
Some of the examples include (i) selling all items separately (item-pricing), (ii) selling only the grand bundle that consists of all items (grand-bundle pricing),
   and (iii) partition mechanisms, where one partitions the items into disjoint groups,
   each with its own price, and sells the groups separately. 
While it is known that none of these solutions is optimal in general among
  bundle-pricings,
there has been substantial work studying basic questions for
each of these simple solutions, including the following: 
How does the revenue achievable by these solutions compare with optimal 
   revenues achievable by 
   bundle or lottery pricings?
What are conditions under which these solutions are optimal? 
Can we compute an optimal solution of each type?

In case (i) of selling the items separately, we know how
to compute efficiently an optimal item-pricing: each item is assigned
separately its optimal price following Myerson's theory \cite{M81}.
In both cases (ii) and (iii) of the grand bundle and partition mechanisms,
the problem of finding an optimal solution is intractable 
(\#P-hard \cite{DDT12b} and NP-hard \cite{Rub16}, respectively).
However, the fact that it is hard to find an optimal solution
of a certain type (grand bundle or partition mechanisms) 
  does not mean that one cannot easily find
  a solution that is not of this type {and has higher revenue
(for example, by selling also individual items), or possibly even}  find a solution that is optimal among all bundle-pricings.
Thus, two central questions remain concerning the bundle-pricing (or optimal deterministic
  mechanism design) problem:
{
\begin{enumerate}
\item Is there an efficient algorithm that finds an optimal bundle-pricing?\vspace{-0.08cm}
\item 
{If the problem above is hard in general},
is there such an algorithm when the instance is promised\\ to have a ``simple'' optimal 
  bundle-pricing?
\end{enumerate}
}

{Our results resolve {both} questions in the negative by showing that the problem is \#P-hard, even when the distributions have support size $2$ for each item and, more importantly, even when 
  the instance is promised to have a unique optimal bundle-pricing that is 
  of a very simple kind,
which we call a \emph{discounted item-pricing}: the seller picks a price for each individual item and a price for the grand bundle of all the items; the buyer can purchase either the grand bundle at its given price or any subset of items at their total individual prices.} 
Such a solution can be described using $n+1$ numbers and the buyer's problem is also easy to solve.
{This is the reason why the exact output format of the problem does not 
  affect our hardness result.}

{This result tells us that the bundle-pricing (deterministic mechanism design) problem is inherently computationally hard, and furthermore the difficulty is not (only) due to the fact that the optimal solution can be very complex, of the kind that one would not use in practice anyway; the problem is hard even when the optimal solution is extremely simple: standard item pricing with a discount for the grand bundle.}

As a by-product of the proof, we also resolve in the negative the question
of whether there is a `nice' characterization (i.e., an easy-to-check
necessary and sufficient condition) of when item-pricings 
are optimal,~i.e.,~whether~an item-pricing can achieve the optimal revenue achievable by
  bundle pricings or whether bundling helps.
The same applies to grand-bundle pricings (and partition mechanisms),
i.e.,~there~is no easy-to-check  
characterization for the optimality of grand-bundle pricings under standard complexity-theoretic~assumptions.

{On the positive side, we show that when $\calF_1,\ldots,\calF_n$ are i.i.d. with support size 2, the optimal revenue achievable by any pricing scheme, even a lottery one, can be achieved by a discounted item-pricing~which, furthermore, can be computed in polynomial time. We discuss our results in detail below in Section \ref{ourresults}.}

\subsection{Our Results}\label{ourresults}

Given an input distribution $\calF$, we use
  $\brev(\calF)$, $\srev(\calF),\drev(\calF)$ and $\rev(\calF)$ to denote the optimal expected revenues 
   achievable by a grand-bundle pricing (i.e., selling the grand bundle only),
   an item-pricing (i.e., selling all items separately),   
   a bundle-pricing, and a lottery-pricing, respectively. %

First 
  we state our positive result for i.i.d. distributions with support size $2$:

\begin{theorem}\label{thm:iid}
\hspace{-0.05cm}When $\calF_1,\ldots,\calF_n$ are i.i.d. with support size $2$ \emph{(}$\{a,b\}$ and $a<b$\emph{)},
  $\emph{\rev}(\calF)$ can~always be achieved by a discounted item-pricing where
  the grand bundle is priced at $kb+(n-k)a$ for some $k\in [0:n]$ and
  each item is priced at $b$.
Moreover, the parameter $k$ can be found in polynomial time.
\end{theorem}


{Our main result addresses the {two questions} from the introduction in the negative:  we show that it is \#P-hard to find an optimal bundle-pricing 
  for a single additive buyer, even when $\calF$ is 
  a product distribution and each $\calF_i$ has support size 2. 
Although in general an optimal solution can be highly complex and consist of exponentially many bundles without a succinct description,
our hardness result is established on instances that are guaranteed to have a unique and very simple optimal solution, 
namely, the discounted item-pricing that we defined earlier. Such a pricing scheme  corresponds to the ubiquitous practice of offering an individual price $\pi_i$ for each item $i$ 
and also the grand bundle of all items at a discounted price $\pi$, for example a combo of a toothpaste, a toothbrush, dental floss, and mouth wash offered at a 15\% discount as compared to the cost of buying them separately. The buyer can choose to buy the grand bundle at $\pi$ or any subset $T$ of items  at $\sum_{i\in T} \pi_i$, whichever brings the highest (nonnegative) utility
{(note that in the latter case the buyer will obviously buy 
the set $T$ of all items whose price is less than or equal to the buyer's value)}. While a discounted item-pricing offers exponentially many bundles, it has a succinct representation by $n+1$ numbers and is easy to implement in practice.} 
{We state our main hardness result in Theorem \ref{thm:main}:}

\begin{theorem}\label{thm:main}
{The optimal bundle-pricing problem is \#P-hard even when (1) all
  distributions have support size $2$ and (2)
  the instance is promised to have a unique optimal solution 
  that is a discounted item-pricing}. 
\end{theorem}  \vspace{0.08cm}
 
{\noindent Indeed, the hard instances constructed in the proof of Theorem \ref{thm:main} have the 
  property that either 
\begin{enumerate}
\item[(i)]  
the grand-bundle pricing\footnote{Note that a grand-bundle pricing is a special case
  of discounted item-pricings.} that offers the grand bundle at the sum of low values of all items;
or \vspace{-0.1cm}
\item[(ii)] the discounted 
  item-pricing that offers each individual item at its high value and 
  the grand bundle\\ at a specific value that can be computed from the instance in polynomial time,
\end{enumerate}
is guaranteed to be optimal among all bundle-pricings, but it is \#P-hard
to determine which one is better.}
Note that 
  {(i) can be equivalently described as
  an item-pricing, with each item priced at its low value, and the revenue can be computed 
  in polynomial time.
These observations together lead to a number of 
  corollaries.}

\begin{corollary}\label{coro1}
The following problems are \#P-hard:
\begin{flushleft}\begin{enumerate}
\item Given a product distribution $\calF$, decide  whether $\emph{\drev}(\calF)=\emph{\srev}(\calF)$, i.e., 
{whether an item-pricing is optimal among all bundle-pricings.}\vspace{-0.1cm}
\item Given a product distribution $\calF$, decide whether $\emph{\drev}(\calF)=\emph{\brev}(\calF)$, i.e., 
{whether a grand-bundle pricing is optimal among all bundle-pricings.}
\end{enumerate}\end{flushleft}
\end{corollary}

\begin{corollary}\label{coro2}
The following problems are \#P-hard: 
\begin{enumerate}
\item Given a product distribution $\calF$, compute $\emph{\drev}(\calF)$.\vspace{-0.12cm}
\item Given a product distribution $\calF$ and
a valuation $\vv$, compute the bundle bought at $\vv$ in any\\ optimal bundle-pricing.
\end{enumerate}
\end{corollary}





We remark finally that all the hardness results hold if
the number of items is unbounded.
For a constant number of items, we obtain a polynomial-time
algorithm {(though the dependency of its running time on the number of items is exponential)}.

\begin{theorem}\label{thm:constant}
When the number of items is constant, 
  an optimal bundle-pricing can be computed in polynomial time.
\end{theorem}

\subsection{Related Work}

The seminal work of Myerson \cite{M81} completely settles the 
  case of selling a single~item,
by giving a {computationally efficient} and {deterministic} mechanism {(i.e., a pricing of the item}) that maximizes the expected revenue among all possible, randomized or deterministic, mechanisms. The more general {multi-dimensional} setting, however, turns out to be inherently more difficult. Unlike Myerson's setting, randomization in~general improves the revenue when there are many items for sale, even if there is a single unit-demand buyer \cite{Than} {(i.e. the buyer wants to buy only one item)} or an additive buyer \cite{ManelliV06}.~It~is~also~known~that~the optimal menu of lotteries may have exponential size \cite{CD+15,HartNisan13}. Moreover, under standard complexity-theoretic assumptions, recent results rule out the existence of computationally efficient algorithms that find a revenue-optimal deterministic or randomized mechanism for a unit-demand buyer \cite{CD+14,CD+15,GHK+05,Briest08}, or a randomized mechanism for an additive buyer \cite{DDT12a}.
However, hardness \mbox{results} for the optimal deterministic mechanism design problem with an
  additive buyer are limited. 
Rubinstein \cite{Rub16} proved that finding an optimal partition mechanism is strongly NP-hard;  Daskalakis et al. \cite{DDT12b} proved that finding an optimal price for selling the grand bundle is \#P-hard. These results are for restrictions of the problem that impose a   specific menu structure, 
   and the original problem remained open  before this work.

{Among these hardness results, the one that is most relevant to ours is that of 
  Daskalakis, Deckelbaum, and Tzamos \cite{DDT12a}.
They construct instances $\calF$ with support size $2$ for each $\calF_i$ and 
  show that the problem of finding~an optimal lottery-pricing (or randomized mechanism) is \#P-hard.
This, however, {does not have any consequences} for the bundle-pricing problem for two reasons.
First, the deterministic mechanism design problem is not necessarily harder than 
  the randomized one.
In fact, in the setting of a unit-demand buyer, the deterministic problem is provably ``easier'': the randomized problem is \#P-hard \cite{CD+15} while the deterministic one is in~NP \cite{CD+14}.
Second, for the construction of \cite{DDT12a} to work for bundle-pricings,
  it would need to be the case that 
  optimal menus of lotteries of their instances are deterministic and
  consist of bundles only.
{However, this is not the case: the solution in \cite{DDT12a} makes essential use of the randomization feature and
 the optimal menu for $\calF$  contains 
  a large number of lotteries (with probabilities in $(0,1)$) 
  for valuations in a certain ``critical'' region.}
Compared to techniques used~in~\cite{DDT12a}, ours are different  in the following two aspects: (1) 
  Since $\drev(\calF)$ is captured by an integer program (instead of a linear program for
  $\rev(\calF)$; see Section \ref{sec:preliminaries}), we cannot use LP duality but have to rely on more discrete and combinatorial arguments to identify its optimal integer solutions; (2) 
  An important step in both proofs is to relax the integer (or linear) program
  that captures $\drev(\calF)$ (or $\rev(\calF)$). 
Our relaxation is significantly different from the LP relaxation of \cite{DDT12a}. 
We need to keep a large set of global constraints from the original IP 
  while local constraints suffice for the purpose of \cite{DDT12a}. 
}

Most of the work on the deterministic mechanism design problem for
  an additive buyer so far focuses on approximation. Hart and Nisan \cite{HN12} studied two simple deterministic mechanisms for product distributions: selling items separately or selling the grand bundle~only. They showed that selling items separately and grand bundling are respectively $\smash{\Omega(1/\log^2 n)}$ and $\smash{\Omega(1/\log n)}$ approximations of the optimal  revenue achievable
  by any (possibly randomized) mechanism (later improved by Li and Yao \cite{Li13} to $\Omega(1/ \log n)$ for both schemes, which is known~to be tight \cite{HN12}).~While neither of these two schemes can achieve~by itself a constant factor approximation, Babaioff et al. \cite{Bab14} showed that 
   the \emph{better} of the two gives a $(1/6)$-approximation. Recently, Daskalakis et al. \cite{DDT13,DDT15} studied conditions for grand-bundling mechanisms to be optimal (for continuous distributions), and showed that this happens if and only if two stochastic dominance conditions hold. Rubinstein \cite{Rub16} worked on partition mechanisms and obtained a polynomial-time approximation scheme (PTAS) for a revenue maximizing partition mechanism. 
A number of other results \cite{symmetry,CH13} obtained approximation schemes for i.i.d. distributions with the MHR property. Giannakopoulos and Koutsoupias \cite{GK14} obtained optimal mechanisms for i.i.d. uniform distri\-butions with up to six items.   Finally Yao \cite{Yao15} introduced a new approach for reducing the $k$-item $n$-bidder problem to the $k$-item $1$-bidder setting and gave a deterministic mechanism that yields at least a constant fraction of the optimal revenue for the 
     more general $k$-item $n$-bidder setting. 

We also note that there is extensive work studying unit-demand buyers (e.g. \cite{CD+14,CD+15,GHK+05,Briest08,ChK07,ChHMS10,CD11,ChMS10,BCKW10,pavlov,WangT14}). 
Besides the papers cited earlier that address the complexity of  an optimal mechanism  in that context, 
the rest of the work, which mostly concerns 
special cases or approximation, is not directly related to the topic of the present paper.


\def\uu{\mathbf{u}} \def\ww{\mathbf{w}}

\section{Preliminaries}\label{sec:preliminaries}

Let $D_i$ be the support of $\calF_i$, and  
  $D=D_1\times \cdots\times D_n$ be the set of valuation vectors. 
For each $\vv\in D$, let
$$
\Pr[\vv]=\Pr_{\calF_1}[v_1]\times\cdots\times\Pr_{\calF_n}[v_n]
$$
denote the probability of $\vv$ drawn from $\calF$.

We first define $\drev(\calF)$, the optimal expected revenue obtainable by 
  a bundle-pricing, by formulating it using an integer program with $n+1$ variables
  associated with each valuation $\vv\in D$: $x_{\vv,1},\ldots,x_{\vv,n}$~and $\pi_\vv$,
  where $x_{\vv,i}\in \{0,1\}$ indicates whether item $i$ is included in the bundle
  the buyer chooses from the menu (with $x_{\vv,i}=1$ if item $i$ is included) 
  when her valuation is $\vv$ and $\pi_\vv$ denotes 
   the price of the bundle.
%
We~also write $\xx_\vv=(x_{\vv,1},\ldots,x_{\vv,n})\in \{0,1\}^n$ to denote the allocation vector for 
  valuation $\vv$.
The integer program then maximizes the expected revenue:
$ 
\sum_{\vv\in D} \pi_\vv\cdot \Pr[\vv]
$ 
subject to the following constraints:
\begin{enumerate}
\item $x_{\vv,i}\in \{0,1\}$ for all $\vv\in D$;\vspace{-0.08cm}
\item For each $\vv\in D$, the utility is nonnegative: $\sum_{i\in [n]} v_i\cdot x_{\vv,i}-\pi_\vv\ge 0$;\vspace{-0.08cm}
\item For all $\ww,\vv\in D$, $\ww$ does not envy the bundle of $\vv$:
$$
\sum_{i\in [n]} w_i\cdot x_{\ww,i}-\pi_\ww\ge
  \sum_{i\in [n]} w_i\cdot x_{\vv,i}-\pi_\vv.
$$
\end{enumerate}
We refer to this integer program as the \emph{standard IP} for $\drev(\calF)$ 
{and the goal of the optimal bundle-pricing problem
  is to find an optimal solution to the standard IP.
As discussed earlier, the exact way of defining the output of the problem does not affect 
  our main results.
(For example, one can adopt the model used in \cite{DDT12a,CD+15}, where a polynomial-time algorithm $A$ 
  for the optimal bundle-pricing problem takes as input a distribution $\calF$ and a valuation $\vv\in D$
  and outputs a bundle $A(\calF,\vv)$ such that $\{A(\calF,\vv):\vv\in D\}$ is an optimal solution to the standard  IP for $\drev(\calF)$.
Under this formulation Theorem \ref{thm:main} implies that there cannot be any such polynomial-time
  algorithm unless \#P can be solved in polynomial time.)}

{The equivalence between the optimal bundle-pricing problem and 
  deterministic mechanism design follows from the observation
  that any feasible solution $\{\xx_\vv,\pi_\vv:\vv\in D\}$ 
  to the standard IP for $\drev(\calF)$ can be equivalently viewed as a deterministic mechanism that is both 
  individually rational and truthful, and vice versa: the mechanism, upon $\vv$ reported by
  the buyer, assigns items $\xx_\vv$ to the buyer and charges her $\pi_\vv$.}

Sometimes (e.g., in Section \ref{sec:hard}), it is more convenient to replace $\pi_\vv$ by a nonnegative
  utility variable $u_\vv$.
The standard IP maximizes the same expected revenue:
$$
\sum_{\vv\in D} \left(\sum_{i\in [n]}v_i\cdot x_{\vv,i}-u_\vv\right)\cdot \Pr[\vv]
$$
subject to the following (slightly simpler) constraints:
\begin{enumerate}
\item $x_{\vv,i}\in \{0,1\}$ and $u_\vv\ge 0$ for all $\vv\in D$;\vspace{-0.08cm}
\item For all $\ww,\vv\in D$, $\ww$ does not envy the bundle of $\vv$:
$$
u_\ww\ge
  \sum_{i\in [n]} w_i\cdot x_{\vv,i}-\left(\sum_{i\in [n]}v_i\cdot x_{\vv,i}-u_\vv\right)
  =u_\vv+\sum_{i\in [n]} (w_i-v_i)\cdot x_{\vv,i}.
$$
\end{enumerate}
We refer to this IP as the standard IP (utility version) for $\drev(\calF)$.

On the other hand, the optimal revenue $\rev(\calF)$ obtainable by 
  a lottery-pricing is captured by the same objective function and linear constraints,
  except that $x_{\vv,i}$ takes values in $[0,1]$ instead of $\{0,1\}$.
We refer to this linear program as the standard LP for $\rev(\calF)$.

\section{IID with Support Size 2}\label{sec:iid}

\newcommand{\Exp}{\operatornamewithlimits{\mathbb{E}}}
\newcommand{\cube}{\operatorname{\{0, 1\}}}
\newcommand{\pcube}{\operatorname{\{\pm1\}}}

\newcommand{\F}{\mathbb{F}}
\newcommand{\C}{\mathbb{C}}
\newcommand{\R}{\mathbb{R}}
\newcommand{\Z}{{\mathbb Z}}
\newcommand{\N}{{\mathbb N}}
\newcommand{\calf}{{\cal F}}
\newcommand{\calm}{{\cal M}}
\newcommand{\eqdef}{{\stackrel{\rm def}{=}}}
\newcommand{\Fp}{{\mathbb F}_{p}}
\newcommand{\polylog}[1]{\mathrm{polylog}(#1)}
\newcommand{\poly}[1]{\mathrm{poly}(#1)}
\newcommand{\diff}[1]{\mathrm{diff}(#1)}
\newcommand{\wt}[1]{\mathrm{wt}(#1)}
\newcommand{\ith}[1]{{#1}^{\text{th}}}

\newcommand{\pairs}{\mathrm{pairs}}
\newcommand{\social}{\mathrm{social}}
\newcommand{\ud}{\mathrm{d}}
\newcommand{\pr}{\mathrm{P}}
\newcommand{\mmax}{\mathrm{Max}}

\newcommand{\dominates}{\preccurlyeq}
\newcommand{\dominated}{\succcurlyeq}

\def\rr{\mathbf{r}} \def\calC{\mathcal{C}} \def\xx{\mathbf{x}} \def\11{\mathbf{1}}
\def\pr{\textsf{Pr}} \def\vv{\mathbf{v}} \def\uu{\mathbf{u}} \def\cc{\mathsf{CC}}
\def\xx{\mathbf{x}} \def\yy{\mathbf{y}} \def\lpr{\mathsf{L}} \def\rpr{\mathsf{R}}
\def\ext{\mathsf{Ext}} \def\aveL{\mathsf{aveL}} \def\aveR{\mathsf{aveR}}
\def\minL{\mathsf{minL}} \def\minR{\mathsf{minR}} \def\ss{\mathbf{s}}
\def\den{\mathsf{den}} \def\upper{\mathsf{upper}} \def\low{\mathsf{lower}}
\def\zz{\mathbf{z}} \def\denL{\mathsf{denL}} \def\denR{\mathsf{denR}}
\def\calP{\mathcal{P}} \def\pp{\mathbf{p}} \def\qq{\mathbf{q}} \def\ww{\mathbf{w}}
\def\calM{\mathcal{M}} \def\calE{\mathcal{E}} \def\calD{\mathcal{D}}
\def\supp{\textsc{supp}} \def\brev{\textsc{BRev}} \def\calF{\mathcal{F}}
\def\brevh{\textsc{BRev}_\text{H}} \def\calN{\mathcal{N}}
\def\brevl{\textsc{BRev}_\text{L}} \def\calT{\mathcal{T}} \def\Rev{\mathsf{Rev}}

We establish Theorem \ref{thm:iid} in this section.
Let $\calF_1,\ldots,\calF_n$ be i.i.d. distributions with support size~$2$.
Without loss of generality we can assume that the support is $\{1,b\}$ with $b>1$.
(If the support is $\{0,b\}$ the problem is trivial: the optimal revenue can be achieved
  by offering every item at price $b$; if the support is $\{a,b\}$ with $0<a<b$,
then we can equivalently rescale it to $\{1,b/a\}$.)
Let $p$ $\in (0,1)$ be the probability that an item takes value $b$, and $1-p$ that it takes 1.
We let $\calP_i$ denote the probability of $\vv\sim \calF$ having $i$ items
at value $b$ and $n-i$ at $1$, for each $i\in [0:n]$.
That is,
$$\calP_i = \binom{n}{i} \cdot p^i \cdot (1-p)^{n-i}.$$
The following lemma for $\calP_i$'s is crucial.
We delay its proof {to the end of this section}.

\begin{lemma} \label{lem:mon}
There exists an  integer $k\in [0:n]$ such that 
\begin{equation}\label{eq:jaja}
(n-i)\calP_i-(b-1)(\calP_{i+1}+\cdots+\calP_n)
\end{equation}
is negative for all $i: 0\le i<k$ and is nonnegative for all $i:k\le i\le n$.
\end{lemma}

Let $k\in [0:n]$ be an integer that satisfies Lemma \ref{lem:mon},
  which is unique and can be computed~in polynomial time.
We use $S^*$ to denote the following discounted item-pricing:
\begin{quote}
The grand bundle $[n]$ is offered at $kb+n-k$ and
each item is offered individually at $b$ (the latter means that
 the buyer can buy any bundle $T\subseteq [n]$
  at price $|T|b$).
\end{quote}
Given $S^*$, the behavior of the buyer is as follows.
If a valuation vector has $k$ or more items at $b$ then the buyer buys the
grand bundle at $kb+n-k$; otherwise it buys all the items that have value $b$.
The expected revenue $R^*$ of the discounted item-pricing $S^*$ is then
$$
R^* = \sum_{1\leq i <k} b i \cdot\calP_i + (kb+n-k) \sum_{k\leq i \leq n} \calP_i.$$
It is clear that given $k$, $R^*$ can be computed in polynomial time.

To finish the proof of Theorem \ref{thm:iid},
  we show that $S^*$ achieves the optimal revenue $\Rev(\calF)$.

\begin{lemma}\label{hahe}
\hspace{-0.05cm}$R^*=\Rev(\calF)$ when $\calF_1,\ldots,\calF_n$ are i.i.d. with support size $2$ and 
  $k$ satisfies \emph{Lemma \ref{lem:mon}}.
\end{lemma}

We start with some preparation for the proof of Lemma \ref{hahe}.
First recall that when  distributions are i.i.d.,  Daskalakis and Weinberg \cite{symmetry} showed that
  there always  exists an optimal solution to the standard LP for $\rev(\calF)$ (we use the
  price version in this section)  that is ``symmetric'': For any permutation $\sigma$ over  
  $[n]$ with $\sigma(\vv)=\ww$ (i.e.  $v_{\sigma(i)}=w_i$ for all $i\in [n]$),
  we always have $\sigma(\xx_\vv)=\xx_\ww$ and $\pi_\vv=\pi_\ww$, i.e., the lotteries bought at $\vv$ and $\ww$
  are the same under the permutation $\sigma$.
Based on that, one can significantly reduce the number of variables for the i.i.d. support-size $2$ case,
  and we refer to the new LP described below as the symmetric LP for $\rev(\calF)$.

The symmetric LP has $3n+1$ variables:   $x_i$, for $i=1,\ldots,n$, is the probability~of getting an item with
value $b$ in $\vv$ when the valuation $\vv$ has $i$ items at $b$ (and $n-i$ items at 1); 
  $y_i$, for~$i=0,1,\ldots, n-1$,
  is the probability of getting an item with value $1$ when the valuation has $i$ items at $b$; 
  finally, $\pi_i$ for $i=0,1,\ldots,n$ is the price of the lottery for a valuation
with $i$ items at $b$.
The symmetric LP maximizes the expected revenue:
$
\sum_{i=0}^n \pi_i\cdot \calP_i
$
subject to the same constraints of the standard LP after replacing $\pi_\vv$ by $\pi_\ell$
  when $\vv$ has $\ell$ items at $b$ and $x_{\vv,i}$ by $x_\ell$ if $v_i=b$ and by $y_\ell$ if $v_i=1$. 
It is not hard to see that the number of distinct constraints left after the replacement is
  polynomial in $n$ and thus, the symmetric LP can be solved exactly in polynomial time.
By \cite{symmetry}, the optimal value of the symmetric LP is $\rev(\calF)$.


We are now ready to prove Lemma \ref{hahe}.

\begin{proof}[Proof of Lemma \ref{hahe}]
Since $R^*$ is the  expected revenue of $S^*$,
  it suffices to show $\Rev(\calF)\le R^*$.
  
For this purpose we will relax the
  symmetric LP for $\rev(\calF)$ and show that its optimal value is at most $R^*$. 
In the relaxed LP we only keep the following constraints of the symmetric LP: 
\begin{flushleft}\begin{enumerate}
\item
  $0 \le x_i\le 1$ for each $i\in [n]$ and $0 \le y_i \leq 1$ for each $i\in [0:n-1]$.
\item $\pi_0 \leq n y_0$ (i.e., the utility at the all-$1$ vector is nonnegative);
\item For each $i\in [n]$, the constraint that the
valuation $\ww$ with $w_j=b$ for $j\in [i]$ and $w_j=1$ for\\ $j>i$ does not
envy the  lottery of $\vv$ with $v_j=b$ for $j\in [i-1]$ and $v_j=1$ for $j>i-1$:
\begin{equation}\label{haha}
b i  x_i  +(n-i)y_i - \pi_i \geq b (i-1) x_{i-1}  +(n-i+b)y_{i-1} - \pi_{i-1}.
\end{equation}
Note that when $i=1$, $x_0$ appears on the RHS with coefficient $0$;
  when $i=n$, $y_n$ appears on the LHS with coefficient $0$. 
For convenience we introduce $x_0=y_n=0$ as dummy variables that never appear in the relaxed LP but   
  help simplify the presentation of these constraints.
\end{enumerate}\end{flushleft}
Since all of them are part of the symmetric LP,
  the optimal value of the relaxed LP is at least $\rev(\calF)$.
In the rest of the proof we show that the optimal value of the relaxed LP is at most $R^*$.  

To this end we use the constraints above to upperbound each $\pi_i$ using $x$ and $y$ variables.
For $i=0$ we use $\pi_0 \leq ny_0$.
For each $j\in [n]$ we have from constraints (2) in the relaxed LP that:
$$\pi_j \leq \pi_{j-1} +b j x_j  +(n-j)y_j - b(j-1)x_{j-1} - (n-j+b)y_{j-1}.$$
Summing these inequalities for all $j=1,\ldots,i$, we get after some cancellations:
$$\pi_i \leq \pi_0 +b i x_i  +(n-i)y_i - (b-1)(y_{i-1}+y_{i-2}+\cdots +y_1)-(n+b-1)y_0.$$
Plugging in $\pi_0 \leq ny_0$, we have for each $i\in [n]$:
$$\pi_i \leq bi x_i  +(n-i)y_i - (b-1)(y_{i-1}+y_{i-2}+\cdots + y_1+y_0).$$

Replacing in the objective function $\sum_i \calP_i \cdot \pi_i$ each $\pi_i$ by its upper bound, we get a linear form in the $x_i$'s, $i\in [n]$, and $y_i$'s, $i\in [0:n-1]$, which upperbounds the
value of the relaxed LP (note that $x_0$ and $y_n$ are dummy variables that do not really appear
  in any constraint).
For each $i \in [n]$, the coefficient of $x_i$ in the linear form is $bi\cdot \calP_i$, thus
this term is maximized if we set $x_i=1$ for each $i\in [n]$. 
The coefficient of $y_0$ is $n \calP_0 - (b-1)(\calP_1 + \cdots + \calP_n)$ 
and the coefficient of $y_i$ for each $i\in [n-1]$ is 
$(n-i) \calP_i - (b-1)(\calP_{i+1}+ \cdots + \calP_n)$.
From the choice of $k\in [0:n]$ and Lemma \ref{lem:mon},
  we have for all $i\in [0:n-1]$: The coefficient of $y_i$ is negative if $i<k$, and 
  is nonnegative if $i\ge k$.  
Therefore, the linear form is maximized when we set $y_i=0$ for all $i<k$ 
and $y_i=1$ for all $i \geq k$.
Applying these substitutions in the linear form, the upper bound
on the value of the LP becomes
(note that $(n-i) \calP_i - (b-1)(\calP_{i+1}+ \cdots + \calP_n)$ is $0$ when $i=n$):
\begin{align*}
&\sum_{1\le i\le n} bi\cdot \calP_i + \sum_{k\le i\le n-1} \big[(n-i) \calP_i - (b-1)(\calP_{i+1}+ \cdots + \calP_n)\big]\\[0.4ex]
&=\sum_{1\le i\le n} bi\cdot \calP_i + \sum_{k\le i\le n} \big[(n-i) \calP_i - (b-1)(\calP_{i+1}+ \cdots + \calP_n)\big]\\[0.4ex]
&=\sum_{1\le i\le n} bi\cdot\calP_i + \sum_{k\le i\le n} \calP_i\cdot \big[(n-i) - (b-1)(i-k)\big]\\[0.4ex]
&=\sum_{1\le i<k} bi\cdot \calP_i + \sum_{k\le i\le n} \calP_i\cdot  [n+(b-1)k] 
= R^*.
\end{align*}
This finishes the proof of the lemma.
\end{proof}

{It remains to prove Lemma \ref{lem:mon}}. We start with the following lemma.

\begin{lemma} \label{lem:ineq}
For all $i=1,\ldots,n-1$, $\calP_i (n-i) + \sum_{j\geq i+1} \calP_j \left(n-\frac{i}{p}\right) > 0$.
\end{lemma}
\begin{proof}
We use induction on $n-i$.\\
{\em Basis}: $n-i=1$, i.e. $i=n-1$. The left-hand side is
$$\calP_{n-1} + \calP_n \left(n-\frac{n-1}{p}\right)=n \cdot p^{n-1}\cdot(1-p)+p^n \cdot 
\left(n-\frac{n-1}{p}\right)=  p^{n-1} >0.$$
{\em Induction Step}: 
We have $\calP_i (n-i) =  \calP_{i+1} \cdot (i+1) \cdot \frac{1-p}{p}$.
Therefore, the left-hand side is equal to
\begin{align*}
LHS & =  \calP_{i+1} \cdot (i+1) \cdot \frac{1-p}{p} + \calP_{i+1} \cdot \left(n-\frac{i}{p}\right)
+ \sum_{j\geq i+2} \calP_j \left(n-\frac{i}{p}\right)\\[0.4ex]
& = \calP_{i+1} (n-(i+1)) + \calP_{i+1}\cdot \frac{1}{p} + \sum_{j\geq i+2} \calP_j \left(n-\frac{i}{p}\right)\\[0.5ex]
& > \calP_{i+1} (n-(i+1)) +  \sum_{j\geq i+2} \calP_j \left(n-\frac{i+1}{p}\right) > 0,
\end{align*}
where the last inequality holds by the induction hypothesis.
 \end{proof}
 
Now we are ready to prove Lemma \ref{lem:mon}: 
\begin{proof}[Proof of Lemma \ref{lem:mon}]
We let $k$ be the smallest $i\in [0:n]$ such that (\ref{eq:jaja}) is nonnegative
  ($k$ is well defined as (\ref{eq:jaja}) is $0$ when $i=n$).
To prove that (\ref{eq:jaja}) is nonnegative for all $i\ge k$, it suffices to show that 
$$\frac{(n-i) \calP_i}{\sum_{j\geq i+1} \calP_j}$$
is monotonically increasing for $i$ from $0$ to $n-1$.
Fix an $i\in [n-1]$. Our goal is to show that
$$\frac{(n-i) \calP_i}{\sum_{j\geq i+1} \calP_j} - \frac{(n-(i-1))\calP_{i-1}}{\sum_{j\geq i} \calP_j} > 0,$$
or equivalently, $(n-i) \calP_i \sum_{j\geq i} \calP_j - (n-(i-1))\calP_{i-1} \sum_{j\geq i+1} \calP_j >0$.
Since $$(n-(i-1))\calP_{i-1} = i\calP_i\cdot \frac{1-p}{p},$$ we can rewrite the
left-hand side as
$$\calP_i \left[ (n-i) \sum_{j\geq i} \calP_j - i \frac{1-p}{p} \sum_{j\geq i+1} \calP_j\right]
=\calP_i\left[
(n-i) \calP_i + \sum_{j\geq i+1} \calP_j (n - (i/p))\right],$$
which is positive by Lemma \ref{lem:ineq}.
Therefore, the desired inequality holds.
\end{proof}



\section{Hardness of Revenue-Optimal Deterministic Mechanism Design}\label{sec:hard}








\newcommand{\eps}{\epsilon}




\def\rr{\mathbf{r}} \def\calC{\mathcal{C}} \def\xx{\mathbf{x}} \def\11{\mathbf{1}}
\def\pr{\textsf{Pr}} \def\vv{\mathbf{v}} \def\uu{\mathbf{u}} \def\cc{\mathsf{CC}}
\def\xx{\mathbf{x}} \def\yy{\mathbf{y}} \def\lpr{\mathsf{L}} \def\rpr{\mathsf{R}}
\def\ext{\mathsf{Ext}} \def\aveL{\mathsf{aveL}} \def\aveR{\mathsf{aveR}}
\def\minL{\mathsf{minL}} \def\minR{\mathsf{minR}} \def\ss{\mathbf{s}}
\def\den{\mathsf{den}} \def\upper{\mathsf{upper}} \def\low{\mathsf{lower}}
\def\zz{\mathbf{z}} \def\denL{\mathsf{denL}} \def\denR{\mathsf{denR}}
\def\calP{\mathcal{P}} \def\pp{\mathbf{p}} \def\qq{\mathbf{q}} \def\ww{\mathbf{w}}
\def\calM{\mathcal{M}} \def\calE{\mathcal{E}} \def\calD{\mathcal{D}}
\def\supp{\textsc{supp}} \def\brev{\textsc{BRev}} \def\calF{\mathcal{F}}
\def\brevh{\textsc{BRev}_\text{H}} \def\calN{\mathcal{N}}
\def\brevl{\textsc{BRev}_\text{L}} \def\calT{\mathcal{T}} \def\Rev{\mathsf{Rev}}
\def\rev{\mathsf{Rev}} \def\calS{\mathcal{S}} \def\calT{\mathcal{T}}
\def\calL{\mathcal{D}} \def\calU{\mathcal{U}} \def\calQ{\mathcal{Q}}

We prove Theorem \ref{thm:main} in this section.
The plan is to 
reduce from the following \#P-hard decision problem called \textsc{COMP} introduced in \cite{CD+15}. 
The input consists of three parts: 1) a set $B$ of $n$ nonnegative 
  integers $B=\{b_1,\ldots,b_n\}$ between $0$ and $2^n$ (we assume
  without loss of generality that $b_1\le \cdots\le b_n$);
  2) a subset $W\subset [n]$ of size $|W|=n/2$ (assume without loss of generality that $n$ is even) and we use 
   $w$ to denote $\sum_{i\in W} b_i$; and 3) an integer $t$.
The question is to decide whether the number of $S\subset [n]$
  such that $|S|=n/2$ and $\sum_{i\in S} b_i\ge w$ is at least $t$ or at most $t-1$.
While COMP was shown to be \#P-hard in \cite{CD+15},
  we need here the same problem with the following two extra conditions on the two input sets $B$ and $W$, which 
  we will refer to as COMP$^*$:
\begin{flushleft}\begin{enumerate}
\item Every $(n/2)$-subset $S\subset [n]$ with $b_n\in S$ has
  $\sum_{i\in S} b_i\ge w$, i.e. $b_1+\cdots+b_{n/2-1}+b_n\ge w$.
\item Every $(n/2)$-subset $S\subset [n]$ that does not contain $b_n$ but 
  contains either $b_1$ or $b_2$ must have $\sum_{i\in S} b_i<w$, i.e. $b_2+b_{n/2+1}+\cdots+b_{n-1}<w$.
\end{enumerate}\end{flushleft}
These extra conditions will come in handy in the reduction below. 

\subsection{\#P-Hardness of \texorpdfstring{COMP$^*$}{COMP*}}
We show that COMP$^*$ is \#P-hard  
via a polynomial-time reduction from COMP.
Let $(B,W,t)$ be an input to the original COMP problem, where $B=\{b_1,\ldots,b_n\}$
  with $b_1,\ldots,b_n\in [0:2^n]$, $W\subset [n]$ with $|W|=n/2$, and $w=\sum_{i\in W} b_i$.
We first construct from $B$ a new set $B'$ of $4n$ integers between $0$ and $2^{4n}$:
\begin{enumerate}
\item $B'$ contains $n$ integers $2^{2n}+b_i$, $i\in [n]$;
\item $B'$ contains $2^{4n}$ (as the largest integer in $B'$) and $3n/2$ copies of $2^{3n}$;
\item The rest of $3n/2-1$ integers of $B'$ are all $0$ (so the smallest two integers in $B'$ are $0$).
\end{enumerate}
The new set $W'$ contains $2n$ integers from $B'$: the $3n/2$ copies of $2^{3n}$
  and each of $2^{2n}+b_i$, $i\in W$. Let 
$$
w'=\sum_{b\in W'} b=(3n/2)\cdot 2^{3n}+ (n/2)\cdot 2^{2n}+w\quad\text{and}\quad
t'=\binom{4n-1}{2n-1} +t.
$$

We show that $B'$ and $W'$ satisfy the two extra conditions.
For the first part note that $\sum_{b\in S} b>w'$ for any subset $S$ that contains the
  largest integer $2^{4n}$.
On the other hand, if $S$ is a $2n$-subset of $W'$ that does not contain $2^{4n}$ but contains 
  at least one $0$, then we have (using $b_i\le 2^n$ for all $i\in [n]$)
$$
\sum_{b\in S} b\le (3n/2)\cdot 2^{3n}+(n/2-1)\cdot (2^{2n}+2^n)<w'.
$$ 

Let $S$ be a $2n$-subset that satisfies $\sum_{b\in S} b\ge w'$.
Then either i) $S$ contains $2^{4n}$, or ii) $S$ contains all the $3n/2$ copies of $2^{3n}$
  and a subset of $n/2$ integers $2^{2n}+b_i$, with the sum of the latter 
  being at least $(n/2)\cdot 2^{2n}+w$.
This then implies that $(B,W,t)$ is a yes-instance if and only if $(B',W',t')$ is a yes-instance.
It follows that COMP$^*$ is also \#P-hard.

\subsection{The Reduction}

We now present the reduction from COMP$^*$ to the optimal bundle-pricing problem.

Given an input instance $(B,W,t)$ of COMP$^*$ (with $B$ and $W$ satisfying the extra conditions) 
  we define an instance $\calF$ of optimal bundle-pricing with $n+1$ items
  and support size $2$.
We shall refer to the first $n$ items as item $i$ for $i\in [n]$ and 
  refer to the last item as the \emph{special} item. 
Let 
$$h=2^{2n},\quad p=\frac{1}{2(h+1)},\quad 
  {\delta=\frac{1}{2^{3n}}},\quad a_i=b_i\delta \quad\text{and}\quad h_i=h+a_i,\quad\text{for each $i\in [n]$} 
$$
(hence $a_i\in [0,2^n\delta]$ and is
  an integer multiple of $\delta$).
Then item $i$ is supported on $\{1,h_i+1\}$. The probability of $h_i+1$
  is $p$ and the probability of $1$ is $1-p$.
Let 
$$
c=w\delta,\quad \alpha = (n/2)h +c,\quad\sigma=\frac{1}{p^{n}}\quad\text{and}\quad
\tau = \frac{\sigma}{\sigma+\alpha}
$$ 
(hence $c\in [0,(n/2)2^n\delta]$ and is an integer multiple of $\delta$).  
The special item is supported on $\{\sigma, \sigma+\alpha \}$.
The probability of $\sigma+\alpha $ is $\tau-\eps$  
and the probability of $\sigma$ is $1-\tau+\eps= ({\alpha}/({\sigma+\alpha})) +\eps$ for some $\eps$ (which 
  is not necessarily positive) to be specified at the end of the proof; for now we only 
  require that ${|\eps|=o(1/\sigma)}$.
Note that since $\sigma \gg \alpha$, 
  the probability $\tau-\eps$ of $\sigma+\alpha$ is very close to 1,
and the probability $1-\tau+\eps$ of $\sigma$ is positive but very close to $0$.
This finishes the description of the bundle pricing instance $\calF$ (except the choice of 
  the parameter $\eps$ which we will set at the end). 

\subsection{Plan of the Proof}

Our plan for the proof is the following. 
In Section \ref{sec:nota}, we introduce some notation and define two
  simple bundle-pricings (Solution 1 and 2), as feasible solutions to the standard IP for $\drev(\calF)$,
  both of which are discounted item-pricings.
Most of the work lies in Section \ref{juju}, where we show that whenever $|\eps|=o(1/\sigma)$,
  one of these two solutions is the \emph{unique} optimal solution to the standard IP and achieves $\drev(\calF)$.
This is done by relaxing the standard IP  and showing that 
  one of these two solutions is the unique optimal solution to the relaxed IP.
As they are both feasible to the standard IP,
  we conclude that one of them is uniquely optimal for the standard IP.

Finally, we set $\eps$ carefully (with $|\eps|=o(1/\sigma)$ as promised) in Section \ref{finish} to show that
  Solution 2 is strictly better if the $(B,W,t)$ used in the construction of $\calF$
  is a yes-instance of problem COMP$^*$,
  and Solution 2 is strictly better if it is a no-instance.
This finishes the proof of Theorem \ref{thm:main}.
  
\subsection{Notation and two simple solutions}\label{sec:nota}

For convenience, we will use a subset $S\subseteq [n]$ to denote a valuation vector
  over the first $n$ items, where $i\in S$ (or $i\notin S$) means that item $i$ takes the high  value
  $h_i+1$ (or low value $1$).
We will also use $(S,\sigma+\alpha)$ (or $(S,\sigma)$) to denote a full valuation vector over all the $n+1$
  items in which the special item has the high value $\sigma+\alpha$ (or low value $\sigma$).  

Given $S\subseteq [n]$, we write $\Pr[S]$ to denote
  $p^{|S|}(1-p)^{n-|S|}$; given an integer $i\in [0:n]$, we write
  $$\Pr[i]=\binom{n}{i}\cdot p^i(1-p)^{n-i},$$
  use $\Pr[i\ge k]$ to denote $\sum_{i=k}^n \Pr[i]$,
  and $\Pr[i>k]$ to denote $\sum_{i=k+1}^n \Pr[i]$.
We also use $\Pr[S,\sigma]$ and $\Pr[S,\sigma+\alpha]$ to denote the probabilities of valuations
  $(S,\sigma)$ and $(S,\sigma+\alpha)$: 
$$\Pr[S,\sigma]=\Pr[S]\cdot (1-\tau+\eps)\quad\text{and}\quad
\Pr[S,\sigma+\alpha ]=\Pr[S]\cdot (\tau-\eps).$$

We use the standard IP for $\drev(\calF)$ (the utility version) but rename the variables as follows.
For each $S\subseteq [n]$ we use $x_{S,i}\in \{0,1\}$ to denote the 
  variable for item $i$ in valuation $(S,\sigma)$ for each $i\in [n]$,
  $z_{S}\in \{0,1\}$ to denote the variable for the special item,
  and $u_S\ge 0$ to denote the utility.
For each $S\subseteq [n]$, we use $x_{S,i}'\in \{0,1\}$ to denote
  the variable for item $i$ in valuation $(S,\sigma+\alpha )$, $z_{S}'\in \{0,1\}$
  to denote the variable for the special item, and $u_S'\ge 0$ to denote the utility.

The two simple bundle-pricings we are interested in are the following:
\begin{flushleft}\begin{enumerate}
\item[] {\em \hspace{-0.5cm}Solution 1}: Offer the grand bundle at $\sigma+n$, or (equivalently) offer
each item at its low value. 
\item[] {\em \hspace{-0.5cm}Solution 2}: The discounted item-pricing where the grand bundle is
  offered at $\sigma+\alpha+n$ and each individual item is offered at its high value.
As discussed in the introduction, this means that the buyer can buy either the grand bundle at
  $\sigma+\alpha+n$ or any bundle of items at the sum of their high values.
When this menu is offered, the buyer buys the grand bundle if its utility is positive
  (since her utility from buying items priced at their high values can never be positive)
  and buys the bundle of items at their high values (if any) if the utility from the grand bundle is negative.
For the case when the utility from the grand bundle is $0$, the buyer still gets the grand bundle
  since it always gives a higher revenue.
\end{enumerate}\end{flushleft}
These two bundle-pricings induce two feasible solutions to the standard IP:
\begin{flushleft}\begin{enumerate}
\item In Solution 1, every valuation $(S,\sigma)$ and $(S,\sigma+\alpha)$ buys the grand bundle.
So we have for each $S\subseteq [n]$ and $i\in [n]$,
  $x_{S,i}=z_{S}=x_{S,i}'=z_{S}'=1$. 
Regarding the utilities we have 
\begin{eqnarray*}
&u_S=\sum_{i\in S} h_i\quad\text{and}\quad
u_S'=\alpha +\sum_{i\in S} h_i.&
\end{eqnarray*}
\item In Solution 2 we have 1) for each $S\subseteq [n]$ and $i\in [n]$, $x_{S,i}'=z_{S}'=1$ and 
  $u_S'= \sum_{i\in S} h_i$; 2) for each $S\subseteq [n]$ with $\sum_{i\in S} h_i\ge \alpha $, 
  we have $x_{S,i}=z_{S}=1$ for all $i\in [n]$, $u_{S}=\sum_{i\in S}h_i-\alpha;$  3) for each $S\subseteq [n]$ with 
  $\sum_{i\in S} h_i<\alpha$, we have 
  $x_{S,i}=1$ for each $i\in S$, $x_{S,i}=0$ for each $i\notin S$, $z_S=0$, and  $u_S=0$.
Given our choice of parameters (i.e. $h\gg a_i,c$), every $S$ with $|S|>n/2$ satisfies case 2) and every 
  $S$ with $|S|<n/2$ satisfies case 3).
A set $S$ with $|S|=n/2$ satisfies case 2) if we have 
$\sum_{i\in S} a_i\ge c$ (equivalently, $\sum_{i\in S} b_i\ge w$)  
  and satisfies case 3) otherwise.
This is the connection with COMP$^*$ that we will explore in the reduction.
\end{enumerate}\end{flushleft}  


As discussed in the plan, we introduce a relaxation of the standard IP that only contains a subset
  of its constraints and refer to it as the \emph{relaxed} IP.
It contains the following constraints:
\begin{enumerate}
\item $x_{S,i},z_S,x_{S,i}',z_S'\in \{0,1\}$ and $u_S,u_S'\ge 0$, for all $S\subseteq [n]$ and $i\in [n]$;
\item For each $S\ne \emptyset$, $(S,\sigma+\alpha)$ does not envy
  $(\emptyset,\sigma+\alpha)$:
\begin{eqnarray*}
&u'_S\ge u'_{\emptyset}+\sum_{i\in S} h_i\cdot x_{\emptyset,i}'.&
\end{eqnarray*}
\item For each $S\subseteq [n]$, $(\emptyset,\sigma+\alpha)$ does not envy
  $(S,\sigma)$:
\begin{eqnarray*}
&u_\emptyset'\ge u_S-\sum_{i\in S} h_i\cdot x_{S,i}+\alpha\cdot z_{S}.&
\end{eqnarray*}
\item For each $S\subseteq [n]$, 
  $(S,\sigma)$ does not envy $(\emptyset,\sigma+\alpha)$:
\begin{eqnarray*}
&u_S\ge u_\emptyset'+\sum_{i\in S} h_i\cdot x_{\emptyset,i}'-\alpha\cdot z_\emptyset'.&
\end{eqnarray*}
\item For each pair of $T,S$ with $T\subset S\subseteq [n]$, $(S,\sigma)$ 
  does not envy $(T,\sigma)$:
\begin{eqnarray*}
&u_S\ge u_T+\sum_{i\in S\setminus T} h_i\cdot x_{T,i}.&
\end{eqnarray*}
\end{enumerate}
The objective function of the relaxed IP is the same expected revenue:
\begin{align*}
&\sum_{S\subseteq [n]} \left(\sum_{i\in S} (h_i+1)\cdot x_{S,i}+\sum_{i\notin S} x_{S,i}+
  \sigma\cdot z_S-u_S\right) \cdot \Pr[S,\sigma]\\
&\hspace{1cm}+ \sum_{S\subseteq [n]} \left(\sum_{i\in S} (h_i+1)\cdot x_{S,i}'+\sum_{i\notin S} x_{S,i}'
   +(\sigma+\alpha)\cdot z_S'-u_S'\right)\cdot  \Pr[S,\sigma+\alpha].
\end{align*}

\subsection{One of the two simple solutions is optimal}\label{juju}

Our goal is the following lemma about optimal solutions to the relaxed IP:

\begin{lemma}\label{lem:main}
In an optimal solution to the relaxed IP, if $z_\emptyset=1$, then it must be Solution 1;
  if $z_\emptyset=0$, then it must be Solution 2.
\end{lemma}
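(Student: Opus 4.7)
The plan is to take an arbitrary optimal solution to the relaxed IP and force it, constraint by constraint, to coincide with Solution~1 when $z_\emptyset=1$ and with Solution~2 when $z_\emptyset=0$. The structural fact driving everything is that each utility variable appears in only a handful of constraints, so it must sit at its lower bound in any optimum; once those bounds are substituted out, the large parameters $\sigma$ and $h_i$ pin down every remaining allocation.

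First I would observe that each $u'_S$ with $S\ne\emptyset$ appears only on the LHS of constraint~(2), so at any optimum $u'_S=u'_\emptyset+\sum_{i\in S}h_i\, x'_{\emptyset,i}$, and (by downward induction on $|S|$) each $u_S$ with $S\ne\emptyset$ equals the maximum of the RHSs of constraints~(4) and~(5). Combining constraints~(3) and~(4) at $S=\emptyset$ gives $\alpha z_\emptyset\le u'_\emptyset-u_\emptyset\le \alpha z'_\emptyset$, so $z_\emptyset\le z'_\emptyset$; constraint~(5) with $T=\emptyset$ propagates $u_\emptyset$ up to every $u_S$, forcing $u_\emptyset=0$ and hence $u'_\emptyset=\alpha z_\emptyset\in\{0,\alpha\}$. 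This yields Case~A ($z_\emptyset=1$, $u'_\emptyset=\alpha$) and Case~B ($z_\emptyset=0$, $u'_\emptyset=0$).

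Next I would pin down $z'_\emptyset=1$ and $x'_{\emptyset,i}=1$ for every $i$ by direct marginal-revenue comparisons exploiting the parameter choices. Setting $z'_\emptyset=1$ gains $\sigma\,\Pr[\emptyset,\sigma+\alpha]$ in direct revenue, which with $\sigma=1/p^n$ and $\Pr[\emptyset,\sigma+\alpha]\approx (1-p)^n$ dwarfs the $O(\alpha)$ cost of potentially raising utilities. Similarly $x'_{\emptyset,i}=1$ contributes roughly $(1-p)^n$ in direct revenue, beating the cost $h_ip\approx 1/2$ of raising $u'_S$ and $u_S$ for $S\ni i$ through constraints~(2) and~(4). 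With these fixed, the binding bounds become $u'_S=u'_\emptyset+\sum_{i\in S}h_i$ and $u_S\ge \sum_{i\in S}h_i-\alpha$ via constraint~(4).

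Finally, propagating to the remaining variables differs between the two cases. In Case~A, constraint~(4) already forces $u_S\ge \sum_{i\in S}h_i$, so flipping any $x_{T,i}$ via constraint~(5) carries no hidden cost, and every remaining $x_{S,i},z_S,x'_{S,i},z'_S$ is set to~$1$ to harvest its direct revenue contribution --- exactly Solution~1. In Case~B, constraint~(3) with $u'_\emptyset=0$ reads $\sum_{i\in S}h_i\, x_{S,i}\ge \alpha z_S$; taking $x_{S,i}=1$ for $i\in S$ saturates the sum at $\sum_{i\in S}h_i$, so the $\sigma$-scale direct revenue forces $z_S=1$ exactly when $\sum_{i\in S}h_i\ge \alpha$, and $z_S=0$ with $x_{S,i}=1$ only for $i\in S$ otherwise --- exactly Solution~2.

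The main obstacle will be the bookkeeping inside the propagation step: for each $S$ one must identify whether constraint~(4) or constraint~(5) is the binding lower bound on $u_S$ before concluding that flipping an allocation variable has no hidden cost. To obtain \emph{uniqueness} rather than mere optimality of Solutions~1 and~2, every marginal-revenue inequality in the previous two steps must be strict, which in turn requires the quantitative comparisons $\sigma\gg\alpha$, $(1-p)^n>h_ip$, and $|\eps|=o(1/\sigma)$ to be verified tightly.
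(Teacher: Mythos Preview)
Your overall architecture---fix the trivially-determined variables first, then split on $z_\emptyset$---matches the paper's, and your treatment of the $x'_{\emptyset,i}$, $z'_\emptyset$ and Case~A steps is essentially the paper's Lemmas~\ref{lem:1}, \ref{lem:3}, \ref{lem:4}. The genuine gap is in Case~B, where two of the paper's hardest steps are asserted without argument.

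First, your derivation $u_\emptyset=0\Rightarrow u'_\emptyset=\alpha z_\emptyset$ uses only constraint~(3) at $S=\emptyset$, but constraint~(3) must hold for \emph{every} $S$, and the RHS there depends on $u_S$, which in turn depends on $u'_\emptyset$ through~(4) and on the undetermined $x_{T,i}$ through~(5). This circularity is real: if some $x_{T,k}=1$ with $T\in\calL$, $k\notin T$, then~(5) pushes $u_S$ up for supersets $S\ni k$, and~(3) can then force $u'_\emptyset>0$. The paper (Lemma~\ref{lem:5}) breaks the circularity by a granularity/shortest-path argument showing that $u'_\emptyset>0$ implies $u'_\emptyset\ge\delta$, then separately handling the sub-cases where some $z_S=1$ with $|S|<n/2$ (giving $u'_\emptyset=\Omega(h)$) versus not.

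Second, and more seriously, your final sentence ``$z_S=0$ with $x_{S,i}=1$ only for $i\in S$ otherwise'' is not justified: constraint~(3) does not mention $x_{S,i}$ for $i\notin S$ at all, so it cannot force those variables to~$0$. Each such $x_{T,k}$ carries a \emph{positive} coefficient $\Pr[T,\sigma]$ in the objective; the only cost is indirect, through~(5) raising $u_S$ for supersets $S$, which via~(3) may force $z_S=0$ even for some $S\in\calU$ and lose $\sigma\,\Pr[S,\sigma]$. Proving this trade-off always favors $x_{T,k}=0$ is the technical heart of the paper: it requires partitioning the offending pairs $(T,k)$ by $|T|$, constructing for each a witness $S\in\calU$ of size $n/2$ or $n/2+1$ with $z_S=0$ (Claim~\ref{hahaclaim}), and then a bipartite counting argument to show $\sigma\sum_S\Pr[S]$ dominates $\sum_{(T,k)}\Pr[T]$. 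The two extra conditions defining COMP$^*$ are used precisely here, to guarantee that every pair $(T,k)$ has such a witness; without them the lemma is false. Your plan does not touch this mechanism.
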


As both solutions are feasible for the standard IP, we have the following corollary:

\begin{corollary}
Either Solution 1 or 2 is the unique optimal solution to the standard IP. 
\end{corollary}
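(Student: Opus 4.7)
The plan is to derive the corollary as an essentially immediate consequence of Lemma \ref{lem:main}, exploiting the fact that the relaxed IP is a genuine relaxation of the standard IP and that both simple solutions are feasible for the standard IP.

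First I would recall from Section \ref{sec:nota} that Solutions 1 and 2 both satisfy all constraints of the standard IP (in particular, all incentive constraints, not only the subset retained in the relaxed IP). Writing $R_1$ and $R_2$ for their objective values, and $\mathrm{OPT}_{\mathrm{std}}$, $\mathrm{OPT}_{\mathrm{rel}}$ for the optimal values of the two IPs, the inclusion of feasible regions gives $\mathrm{OPT}_{\mathrm{std}} \le \mathrm{OPT}_{\mathrm{rel}}$, while the feasibility of the two simple solutions for the standard IP gives $\mathrm{OPT}_{\mathrm{std}} \ge \max(R_1,R_2)$.

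Next I would apply Lemma \ref{lem:main}: every optimal solution to the relaxed IP with $z_\emptyset=1$ coincides with Solution 1, and every optimal solution with $z_\emptyset=0$ coincides with Solution 2, so $\mathrm{OPT}_{\mathrm{rel}} = \max(R_1,R_2)$. Chaining the inequalities yields $\mathrm{OPT}_{\mathrm{std}} = \mathrm{OPT}_{\mathrm{rel}} = \max(R_1,R_2)$. To finish, I would take any optimum of the standard IP; it achieves value $\max(R_1,R_2)$ and is feasible for the relaxed IP, hence is also optimal for the relaxed IP, and by Lemma \ref{lem:main} it must be one of Solutions 1 and 2.

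I do not anticipate a real obstacle here; the argument is a routine unwinding of the relaxation setup. The only delicate point is the word \emph{unique}, which would fail in a degenerate tie $R_1 = R_2$; this case is precluded by the careful choice of $\epsilon$ made in Section \ref{finish}, which is designed precisely to make one of the two simple solutions strictly outperform the other, so that whichever of them achieves $\max(R_1,R_2)$ is the unique optimum of the standard IP.
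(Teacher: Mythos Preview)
Your proposal is correct and follows the same approach the paper has in mind: the corollary is obtained directly from Lemma~\ref{lem:main} together with the feasibility of Solutions~1 and~2 for the standard IP, via the standard relaxation sandwich $\max(R_1,R_2)\le \mathrm{OPT}_{\mathrm{std}}\le \mathrm{OPT}_{\mathrm{rel}}=\max(R_1,R_2)$. Your observation about the degenerate tie $R_1=R_2$ is well taken; the paper is tacitly relying on the choice of $\epsilon$ in Section~\ref{finish} to break any such tie, exactly as you note.
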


We start the proof of Lemma \ref{lem:main} with a few simple observations.

\begin{lemma}\label{lem:1}
In any optimal solution to the relaxed IP, we have
  $x_{S,i}=1$ for all $S\subseteq [n]$ and $i\in S$,
  $x_{S,i}'=1$ for all $S\ne \emptyset$ and $i\in [n]$, and
  $z_S'=1$ for all $S\subseteq [n]$.
\end{lemma}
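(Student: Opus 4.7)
The plan is to show, for each variable in the lemma's statement, that (i) it appears in the objective with a strictly positive coefficient, and (ii) it either does not appear in any constraint at all, or appears only on the right-hand side with a negative sign (so that increasing it merely relaxes the constraint). These two facts together imply that if the variable were set to $0$ in any feasible solution, flipping it to $1$ would strictly improve the objective while preserving feasibility, contradicting optimality.

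I would then verify the three cases by direct inspection of the relaxed IP. For $x_{S,i}$ with $i\in S$: the objective coefficient is $(h_i+1)\cdot \Pr[S,\sigma]>0$, and the only constraint in which this variable appears is constraint~3 indexed by this particular $S$, where it contributes $-h_i\cdot x_{S,i}$ to the RHS; note that constraint~5 refers to $x_{T,i}$ only for $i\in S\setminus T$, so $x_{S,i}$ with $i\in S$ does not occur there. For $x_{S,i}'$ with $S\ne\emptyset$: the objective coefficient is strictly positive (either $(h_i+1)\cdot \Pr[S,\sigma+\alpha]$ or $\Pr[S,\sigma+\alpha]$), and a sweep through constraints 2--5 shows that only $x_{\emptyset,i}'$ ever appears, so $x_{S,i}'$ with $S\ne\emptyset$ is absent from every constraint. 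For $z_S'$: the objective coefficient $(\sigma+\alpha)\cdot \Pr[S,\sigma+\alpha]$ is positive; the variable $z_\emptyset'$ occurs only in constraint~4 with a relaxing $-\alpha$ factor on the RHS, and $z_S'$ with $S\ne\emptyset$ appears in no constraint.

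The only real work is the enumeration itself, i.e.\ confirming that no occurrence of the listed variables among the five constraint families has been missed. I do not expect any genuine obstacle: the lemma is essentially a sanity check that the relaxed IP has been set up so that the ``obvious'' integer settings for these variables are automatically forced, freeing us to focus in the remainder of Section~\ref{juju} on the genuinely constrained variables $x_{\emptyset,i}'$, $z_S$, $x_{S,i}$ with $i\notin S$, $u_S$, and $u_S'$.
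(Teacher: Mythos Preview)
Your proposal is correct and follows essentially the same approach as the paper's own proof: identify that each variable has a positive coefficient in the objective and either does not appear in any constraint or appears only on the right-hand side with a negative coefficient, so flipping it from $0$ to $1$ preserves feasibility and strictly increases the objective. The paper's proof is somewhat terser (it dispatches $x_{S,i}'$ and $z_S'$ for $S\ne\emptyset$ together as ``trivial,'' then handles $z_\emptyset'$ explicitly, and finally says ``the same argument works'' for $x_{S,i}$ with $i\in S$), but your more explicit constraint-by-constraint enumeration is the same argument.
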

\begin{proof}
The part of $\smash{x_{S,i}'=1}$ for all $S\ne \emptyset$ and $i\in [n]$
  and $\smash{z_S'=1}$ for all $S\ne \emptyset$
  is trivial as they do not appear in the constraints of the relaxed IP but
  appear with a positive coefficient in the objective function.
For $z_\emptyset'$, note that it only appears on the 
  right hand side with a negative coefficient.
Thus, if $z_\emptyset'=0$ in a feasible solution, we can switch it to $1$ and the
  new solution remains feasible but the expected revenue goes up strictly.
The same argument works for $x_{S,i}$ for all $S\subseteq [n]$ and $i\in S$.
\end{proof}

Next we show that $x_{\emptyset,i}'=1$ for all $i\in [n]$ in any optimal solution to the relaxed IP.

\begin{lemma}\label{lem:3}
In an optimal solution to the relaxed IP, $x'_{\emptyset,i}=1$ for all $i\in [n]$ and
  $u_S'=u_\emptyset'+\sum_{i\in S} h_i$ for all nonempty $S\subseteq [n]$.
\end{lemma}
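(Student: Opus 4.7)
The plan is to prove part (1), that $x_{\emptyset,i}'=1$ for every $i\in[n]$ in any optimum, via a swap argument; given (1), part (2) that $u_S'=u_\emptyset'+\sum_{i\in S}h_i$ for $S\ne\emptyset$ then follows immediately, since $u_S'$ appears only on the LHS of constraint (2) and with a negative coefficient in the objective, so in any optimum it takes the tight value $u_\emptyset'+\sum_{i\in S}h_ix_{\emptyset,i}'=u_\emptyset'+\sum_{i\in S}h_i$, exactly as in Lemma~\ref{lem:1}.

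For part (1), the plan is to suppose for contradiction that some optimum has $x_{\emptyset,i}'=0$ for some $i$ and build a strictly better feasible solution by flipping $x_{\emptyset,i}'$ to $1$ and raising $u_S\mapsto u_S+h_i$ for every $S\ni i$ and $u_S'\mapsto u_S'+h_i$ for every nonempty $S\ni i$, leaving every other variable unchanged. The direct gain in the objective is $\Pr[\emptyset,\sigma+\alpha]=(1-p)^n(\tau-\eps)$, while the extra utility cost is at most $h_i\sum_{S\ni i}(\Pr[S,\sigma]+\Pr[S,\sigma+\alpha])=h_ip$. With $h_i=h+a_i$ and $p=1/(2(h+1))$ one has $h_ip<1/2+o(1)$, whereas $(1-p)^n(\tau-\eps)=1-o(1)$ since $np\to 0$ and $\alpha/\sigma\to 0$; hence the net change is strictly positive, contradicting optimality.

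The main obstacle is verifying feasibility of the perturbed solution. Constraints (2) and (4) are immediate: the $+h_i$ added to $u_S$ or $u_S'$ cancels the $+h_i$ bump in the RHS caused by flipping $x_{\emptyset,i}'$. Constraint (5) is preserved because $T\subset S$ with $T\ni i$ forces $S\ni i$, so $u_T$ and $u_S$ shift by the same $h_i$. The delicate case is constraint (3) for $S\ni i$: its RHS grows by $h_i$ while $u_\emptyset'$ is unchanged, so one must show the original slack in (3) is already at least $h_i$. The key input here is Lemma~\ref{lem:1}: using $x_{S,j}=1$ for $j\in S$ and $z_\emptyset'=1$, the original constraint (4) RHS for $S\ni i$ equals $u_\emptyset'+\sum_{j\in S,\,j\ne i}h_jx_{\emptyset,j}'-\alpha$, so when (4) is the binding lower bound for $u_S$ the slack of (3) simplifies to $\sum_j h_j x_{S,j}-\sum_{j\in S,\,j\ne i}h_jx_{\emptyset,j}'+\alpha(1-z_S)\ge h_i+\alpha(1-z_S)\ge h_i$. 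When instead (5) or nonnegativity is binding, $u_S$ is already larger, but the corresponding required increase $\Delta u_S$ is strictly below $h_i$, and a parallel calculation shows the smaller slack still suffices. This case analysis, though routine, is the main technical burden of the lemma.
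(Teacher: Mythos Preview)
Your overall strategy and the second part of the lemma are correct, but the feasibility check for constraint~(3) under your modification $u_S\mapsto u_S+h_i$ has a real gap. Your slack computation for~(3) assumes constraint~(4) is binding for $S$; what the combination of~(3) and~(4) (together with Lemma~\ref{lem:1}) actually gives is $\mathrm{slack}_S(3)+\mathrm{slack}_S(4)\ge h_i$, not $\mathrm{slack}_S(3)\ge h_i$ by itself. Concretely, suppose $I=\{i\}$, take $T\subset S$ with $i\in S\setminus T$, $x_{T,j}=1$ for all $j\in S\setminus T$, and $u_T$ tight for~(4). Then~(5) forces $u_S\ge u_\emptyset'+\sum_{j\in S}h_j-\alpha$, and with $z_S=1$ constraint~(3) forces equality, so $\mathrm{slack}_S(3)=0$; adding $h_i$ to $u_S$ then violates~(3). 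Your fallback clause (``the required increase $\Delta u_S$ is strictly below $h_i$ \ldots the smaller slack still suffices'') is inconsistent with the stated uniform shift, and once the shifts become $S$-dependent your verification of~(5) --- which relied on $u_T$ and $u_S$ moving by the \emph{same} amount --- no longer applies.

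The paper's fix is different in two respects. First, it flips all indices in $I=\{i:x_{\emptyset,i}'=0\}$ simultaneously. Second, it does not add $h_i$ to $u_S$ but instead sets $u_S\leftarrow\max\bigl(u_S,\ u_\emptyset'+\sum_{j\in S}h_j-\alpha\bigr)$, i.e.\ exactly the minimum needed for the new~(4). The payoff is that any \emph{changed} $u_S$ now has the closed form $u_\emptyset'+\sum_{j\in S}h_j-\alpha$, from which~(3) is immediate (the RHS becomes $u_\emptyset'-\alpha+\alpha z_S\le u_\emptyset'$) and~(5) follows because a changed $u_T$ satisfies $u_S-u_T\ge\sum_{j\in S\setminus T}h_j$. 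Flipping a single index does not yield such a clean form, since the new RHS of~(4) still carries the remaining $x_{\emptyset,j}'$ factors, which is why your one-at-a-time scheme runs into the (5)-versus-(3) tension above.
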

\begin{proof}
Assume for contradiction that $x'_{\emptyset,i}=0$ for $i\in I$ and $I$ is nonempty.
Then we make the following changes to obtain a new solution:
\begin{enumerate}
\item[i)] We change $x'_{\emptyset,i}$ from $0$ to $1$ for each $i\in I$;
\item[ii)] For each $S\ne \emptyset$, we increase $u_S'$ by $\sum_{i\in S\cap I} h_i$;
\item[iii)] For each $S\subseteq [n]$, we replace $u_S$ by the maximum of the original $u_S$
  and $u_\emptyset'+\sum_{i\in S} h_i-\alpha $\\ (by doing this the new $u_S$ can go up
  by at most $\sum_{i\in S\cap I} h_i$ because of constraints (4)).
\end{enumerate}
  
We first verify that the new solution remains feasible and then show that
  its expected revenue is strictly better than that of the original solution.
For its feasibility, constraints in (2) and (4) are trivial.
 (3) for $S$ holds trivially if $u_S$ remains the same.
Otherwise $u_S=u_\emptyset'+\sum_{i\in S}h_i-\alpha $ and thus,
\begin{eqnarray*}
&u_S-\sum_{i\in S} h_i\cdot x_{S,i}+\alpha\cdot z_S
=u_S-\sum_{i\in S} h_i+\alpha\cdot z_S
\le u_S-\sum_{i\in S} h_i+\alpha=u_\emptyset'.&
\end{eqnarray*}
For 5), the constraint for $T\subset S$ holds trivially if $u_T$ remains the same.
If $u_T$ goes up, we have
\begin{eqnarray*}
&u_S\ge u_\emptyset'+\sum_{i\in S}h_i-\alpha\quad\text{and}\quad
u_T=u_\emptyset'+\sum_{i\in T}h_i-\alpha&
\end{eqnarray*}
and thus, $u_S-u_T\ge \sum_{i\in S\setminus T} h_i\ge \sum_{i\in S\setminus T}h_i\cdot x_{T,i}.$
This shows that the new solution is also feasible.

Finally, comparing the two solutions, the net gain of expected revenue in
  the new one is at least
\begin{equation*}  
|I|\cdot \Pr[\emptyset,\sigma+\alpha] - \sum_{S\subseteq [n]} \big(\Pr[S,\sigma+\alpha]
  +\Pr[S,\sigma]\big)\left(\sum_{i\in S\cap I} h_i \right)= |I|\cdot 
  ( \tau-\epsilon)\cdot (1-p)^n-\sum_{i\in I} ph_i>0
\end{equation*}
as by our choice of parameters 
  the {first term is close to $|I|$ and the second term is close to $|I|/2$.}
  
The second part of the lemma follows trivially since $u_S'$ appears only in the LHS of constraints (2)
  when $S\ne \emptyset$ and appears in the objective function with a negative coefficient.   
\end{proof}

\def\sol{\mathsf{Sol}}

%
  
So far we have shown in an optimal solution to the relaxed IP that
  every entry in $\xx'$ and $\zz'$ is $1$ and
  the only unsettled variable in $\uu'$ is $u_\emptyset'$, with 
  $u_S'=u_\emptyset'+\sum_{i\in S} h_i$ for all $S\ne\emptyset$.
(For a sanity check, note that these conditions hold in  Solution 1 and 2.)
Next
  we prove the first case of Lemma \ref{lem:main}.
    
\begin{lemma}\label{lem:4}
In an optimal solution to the relaxed IP,
  if $z_\emptyset=1$, then it must be Solution 1.
\end{lemma}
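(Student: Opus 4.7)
My plan is to peel away each remaining free variable in sequence and show it is forced to its Solution 1 value. Set $\mu := u_\emptyset' - \alpha$; I will establish the intermediate equalities $u_S' = \alpha+\mu+\sum_{i\in S} h_i$, $u_S = \mu+\sum_{i\in S} h_i$, $z_S=1$, and $x_{S,i}=1$ for every $i\notin S$, and then show $\mu=0$ at the very end, matching Solution 1 exactly.

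First, plug $S=\emptyset$ into constraint (3) and use $z_\emptyset = 1$ to obtain $u_\emptyset' \ge u_\emptyset + \alpha$, so $\mu \ge u_\emptyset \ge 0$. Combined with Lemma \ref{lem:3} this yields $u_S' = \alpha + \mu + \sum_{i \in S} h_i$ for every $S$. Next, for each $S$, constraint (4) together with Lemmas \ref{lem:1} and \ref{lem:3} (which supply $z_\emptyset' = 1$ and $x_{\emptyset,i}' = 1$) gives the lower bound $u_S \ge \mu + \sum_{i \in S} h_i$, while applying constraint (3) to $S$ (using $x_{S,i}=1$ for $i \in S$ from Lemma \ref{lem:1}) gives the upper bound $u_S \le \alpha + \mu + \sum_{i \in S} h_i - \alpha z_S$. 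When $z_S = 1$ these two bounds pinch; when $z_S = 0$ the $-u_S$ coefficient in the objective pushes $u_S$ down to the same lower bound.

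Next I would determine $z_S$ and each $x_{S,i}$ with $i \notin S$ by comparing feasible settings in the objective. With $u_S$ set as above, the contribution of the $(S,\sigma)$ block reduces to
\[
\bigl(|S| + \sigma z_S + \sum_{i \notin S} x_{S,i} - \mu\bigr)\cdot \Pr[S,\sigma].
\]
Because $\sigma(1-\tau+\eps) = \sigma\alpha/(\sigma+\alpha) + \sigma\eps > 0$ (here $|\eps|=o(1/\sigma)$ is essential), flipping $z_S$ from $0$ to $1$ strictly increases revenue, so $z_S = 1$. The coefficient of $x_{S,i}$ for $i \notin S$ is $\Pr[S,\sigma] > 0$, so it is advantageous to take $x_{S,i} = 1$; this is consistent with constraint (5), which, under $u_T = \mu + \sum_{i \in T} h_i$ and $u_{S'} = \mu + \sum_{i \in S'} h_i$, holds with equality at $x_{T,i} = 1$ for every $i \in S'\setminus T$.

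Finally, I would pin down $\mu$. Collecting all $\mu$-dependent terms in the objective yields
\[
-\mu\Bigl[(1-\tau+\eps)\sum_S \Pr[S] + (\tau-\eps)\sum_S \Pr[S]\Bigr] = -\mu,
\]
so the revenue strictly decreases in $\mu$ and the optimum takes $\mu = 0$, whence $u_\emptyset = 0$. All free variables now agree with Solution 1.

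The main obstacle is the coupled nature of constraints (3), (4), and (5) that tie every $u_S$ together: one must verify that pinning every $u_S$ at its lower bound simultaneously is feasible and not blocked by some other $u_{S'}$. This reduces to checking that the affine assignment $u_S = \mu + \sum_{i \in S} h_i$ makes each constraint (5) hold with equality exactly when the relevant $x_{T,i}$ equals $1$, so the optimistic choices of $x_{T,i}=1$ for $i\notin T$ and $z_S = 1$ propagate consistently across the entire lattice of subsets.
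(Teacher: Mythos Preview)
Your argument is correct, and it is a cleaner variant of what the paper does. The paper's proof first spends a paragraph establishing $x_{\emptyset,i}=1$ for all $i$ via a local improvement argument (mirroring the proof of Lemma~\ref{lem:3}), then uses constraint~(5) with $T=\emptyset$ to get $u_S\ge u_\emptyset+\sum_{i\in S}h_i\ge\sum_{i\in S}h_i$, and finally observes that every utility is at least its Solution~1 value while every allocation variable is at most $1$, forcing the solution to coincide with Solution~1. You bypass the improvement step entirely by reading the lower bound $u_S\ge u_\emptyset'-\alpha+\sum_{i\in S}h_i$ straight off constraint~(4) (which is available because Lemmas~\ref{lem:1} and~\ref{lem:3} already supply $z_\emptyset'=1$ and $x_{\emptyset,i}'=1$); the domination/comparison with Solution~1 is then the same in spirit, just parametrized by your $\mu$. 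Your route is shorter and avoids constructing an improving solution; the paper's route has the virtue of being stylistically parallel to the surrounding lemmas. One small point: when you flip $z_S$ from $0$ to $1$ you should note explicitly that constraint~(3) still holds (it becomes an equality $u_\emptyset'=\mu+\alpha$), but you effectively covered this in the ``bounds pinch'' remark.
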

\begin{proof}
Assuming $z_{\emptyset}=1$, we have $u_\emptyset'\ge u_\emptyset+\alpha$ from constraints (3). 
Next we show that $x_{\emptyset,i}=1$ for all $i\in [n]$.
Assume for contradiction that $x_{\emptyset,i}=0$ for $i\in I$ and $I$ is nonempty.
Then we make the following changes to obtain a new solution: i) We change $x_{\emptyset,i}$ from
  $0$ to $1$ for every $i\in I$; and ii)
For each $S\ne \emptyset$, we change $u_S$ to be the maximum of the original
  $u_S$ and $u_\emptyset+\sum_{i\in S} h_i$ (by doing this $u_S$ can
  go up by at most $\sum_{i\in S\cap I}h_i$ due to constraints (5) for $T=\emptyset$).
 
We first verify that the new solution is also feasible and then show that
  its expected revenue is strictly higher than that of the old solution.
For its feasibility, constraints in (2) and (4) are trivial.
For (3), the constraint for $S\subseteq [n]$ is trivial if $u_S$ stays the same.
Otherwise, $u_S=u_\emptyset+\sum_{i\in S} h_i$ and
\begin{eqnarray*}
&\text{RHS of (3)}=u_S-\sum_{i\in S}h_i+\alpha\cdot z_S=u_\emptyset+\alpha\cdot z_S\le u_\emptyset+\alpha\le u_\emptyset'.&
\end{eqnarray*}
The constraint (5) for $T\subset S$ is trivial if $u_T$ remains the same.
Otherwise, we have $u_T=u_\emptyset+\sum_{i\in T} h_i$ and $u_S\ge u_\emptyset+ \sum_{i\in S} h_i$. Thus,
$u_S-u_T \ge \sum_{i\in S\setminus T} h_i\cdot x_{T,i}$.
This finishes the proof of feasibility.

Next the net gain of expected revenue in the new solution is at least
$$
|I|\cdot \Pr[\emptyset,\sigma]-\sum_{S\subseteq [n]} \Pr[S,\sigma]\cdot \sum_{i\in S\cap I}h_i
=(1-\tau+\eps)\cdot \left(|I|\cdot (1-p)^n-\sum_{i\in I} ph_i\right)>0,
$$
following the same argument used  in the previous lemma.
We conclude that $x_{\emptyset,i}=1$ for all $i\in [n]$.

Finally, given that $x_{\emptyset,i}=1$ for all $i\in [n]$ and $u_\emptyset\ge 0$, we have
  $u_S\ge \sum_{i\in S}h_i$, $u_\emptyset'\ge \alpha$ and $$u_S'= u_\emptyset'+\sum_{i\in S} h_i 
  \ge \alpha+\sum_{i\in S} h_i .$$
Comparing it with Solution 1, a) the utility of every valuation of this solution
  is at least as large as that of Solution 1; and b) every allocation variable $x_{S,i}$,
  $\smash{x_{S,i}'}$, $z_S$ and $z_S'$ in Solution $1$ is $1$ (the maximum possible).
Thus, its revenue is no more than that of Solution 1 and in order to be optimal,
  it must be exactly the same as Solution 1.
This finishes the proof of the lemma.
\end{proof}

In the rest of this subsection, we consider the more challenging case when $z_\emptyset=0$, and prove
  the second part of Lemma \ref{lem:main} using a sequence of lemmas.
First we show that $u_\emptyset'=0$ when $z_\emptyset=0$.
  
\begin{lemma}\label{lem:5}
In any optimal solution to the relaxed IP, if $z_\emptyset=0$ then we have $u_\emptyset'=0$. 
\end{lemma}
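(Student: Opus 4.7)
The plan is to derive a contradiction from $u_\emptyset' > 0$ by constructing a feasible solution with strictly higher expected revenue. The leverage is Lemma~\ref{lem:3}: the equality $u_S' = u_\emptyset' + \sum_{i\in S} h_i$ means that any drop of $\Delta$ in $u_\emptyset'$ simultaneously permits an equal drop in every $u_S'$, worth $\Delta(\tau-\eps)$ in the objective.

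The first move is a \emph{uniform downward shift}, by some $\Delta > 0$, of $u_\emptyset'$, every $u_S'$, and every $u_S$. Using Lemmas~\ref{lem:1} and~\ref{lem:3} to substitute the pinned-down allocation variables, constraints (2)--(5) are translation-invariant under this shift (both sides of each inequality drop by $\Delta$), so feasibility is preserved so long as every variable stays nonnegative; the revenue gain is $\Delta\cdot\sum_S(\Pr[S,\sigma]+\Pr[S,\sigma+\alpha]) = \Delta$. Thus if $u_S > 0$ for all $S$, picking $\Delta = \min(u_\emptyset', \min_S u_S) > 0$ already yields a strict improvement.

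The only obstacle is some $S^*$ with $u_{S^*}=0$ at which constraint~(3) is tight; tightness combined with Lemma~\ref{lem:1} then forces $z_{S^*}=1$, $\sum_{i\in S^*} h_i < \alpha$, and $u_\emptyset' = \alpha - \sum_{i\in S^*} h_i$. The parameter gap $h = 2^{2n} \gg a_i, c$ pins all such $S^*$ down to a single common size $k$ with $1 \leq k \leq n/2$ (the case $k=0$ is excluded by $z_\emptyset=0$). The remedy is to flip $z_{S^*}$ from $1$ to $0$ for each such $S^*$: this trivialises the tight constraint (its RHS drops to $-\sum_{i\in S^*} h_i < 0$) at a revenue cost of $\sigma\Pr[S^*,\sigma]$ per flip. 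Since $|\eps|=o(1/\sigma)$ gives $\sigma(1-\tau+\eps)=\alpha+o(1)$, each flip costs $\approx \alpha\,p^k(1-p)^{n-k}$; afterwards the uniform shift proceeds with some $\Delta$ bounded below by the granularity $1/2^{3n}$ of $u_\emptyset'$ inherited from the multiples-of-$\delta$ structure of $\alpha$ and the $h_i$'s.

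The main obstacle is the final accounting: verifying that the total flip cost, bounded by $O\bigl(\alpha\,\binom{n}{k}\,p^k\bigr)$, is strictly dominated by the shift gain $\Delta \geq 1/2^{3n}$. This leans on the exponential separations $p \leq 1/(2h)$, $h = 2^{2n}$, $\sigma = 1/p^n$ built into the construction precisely to enable this kind of comparison.
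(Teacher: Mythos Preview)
Your approach has two gaps that do not close without essentially rebuilding the paper's argument.

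First, flipping $z_{S^*}$ from $1$ to $0$ does \emph{not} unblock the uniform shift: after the flip $u_{S^*}$ is still $0$ and cannot be pushed negative. What the flip would unblock is a \emph{partial} shift of only $u_\emptyset'$ and the $u_S'$ (leaving every $u_S$ fixed), since the only lower bounds on $u_\emptyset'$ come from the constraints in~(3). But then the obstruction is tightness of~(3) at some $S$, and there is no reason such an $S$ must have $u_S=0$ or $z_S=1$: constraints~(5) can force $u_S$ as large as $u_\emptyset' + \sum_{i\in S} h_i$ (via some $T\subset S$ with many $x_{T,i}=1$), making~(3) tight with $z_S = 0$ and nothing to flip. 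Your structural claim that all obstacles sit at a single level~$k$ with $z_{S^*}=1$ is therefore unjustified. (Note also that $S^*=\emptyset$ itself likely has $u_\emptyset=0$ and, by hypothesis, $z_\emptyset=0$, already falsifying the claimed implication.)

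Second, the accounting does not close as stated. For $k < n/2$ the flip cost $\alpha \binom{n}{k} p^k$ is of order $n^2$ already at $k=1$, dwarfing the claimed gain $\delta = 2^{-3n}$. The paper handles this with a case split you are missing: either some $S$ with $|S|<n/2$ has $z_S=1$, in which case~(3) forces $u_\emptyset' = \Omega(h)$ and this large gain absorbs the large cost; or every such $S$ has $z_S=0$, in which case only level-$(n/2)$ sets contribute and their total cost $O(\alpha\cdot 2^n p^{n/2})$ really is $\ll \delta$. In both cases the paper replaces the solution wholesale---resetting all allocations and utilities to their Solution-2 values at once---rather than attempting an incremental shift, precisely to sidestep the cascading-tightness issues above.
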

\begin{proof}
Assume for contradiction that $u_\emptyset'>0$.
We show first that $u_\emptyset'>0$ implies that $u_\emptyset'\ge \delta$.

For this purpose we fix  
  all the allocation variables to their $\{0,1\}$ values in the optimal solution~and replacing each $u_S'$, $S\ne \emptyset$, using $u_S'=u_\emptyset'+\sum_{i\in S} h_i$ 
  (by Lemma \ref{lem:4})
  to obtain a linear program over the rest of utility variables
  $u_S$, $S\subseteq [n]$, and $u_\emptyset'$.
The linear program maximizes the objective function 
  (hence these utility variables must have the minimum possible values), subject to the constraints 
(1), (3), (4), (5) of the relaxed IP
(note that constraints (2) are already satisfied).   
All the constraints (3), (4), (5) of the relaxed IP have the form $v \geq v' + b$ 
where $v, v'$ are two variables from
the set $V= \{ u_S:S\subseteq [n]\} \cup \{u_\emptyset'\}$ and
$b$ has the form $c_1 h +  c_2 \delta$, where $c_1, c_2$ are integers
and $|c_2| \leq O(n 2^n)$ (note that all $h_i$ have this form
with $c_2 \in [0,2^n]$, $\alpha$ also has this form with $c_2 \in [0,(n/2)2^n]$,
and the allocation variables have value 0 or 1).
Recall that $h=2^{2n}$ and $\delta = 1/2^{3n}$.
The problem of finding the optimal utilities becomes a shortest-path problem over $2^n+2$ vertices: 
  the set $V$ plus an extra vertex $0$ as the source vertex.
The distance from $0$ to each vertex in $V$ is set to $0$ and 
  the distance from $v'$ to $v$ is set to $-b$ if $v\ge v'+b$ is a constraint in
  the relaxed IP.
The optimal value of $u_\emptyset'$ is then 
  the length of the shortest path from vertex $0$ to vertex $u_\emptyset'$ with the sign changed.
Hence $u_{\emptyset}'$ is a sum of at most $2^{n}+1$ edge weights ($-b$), with each of them 
  being an integer multiple $c_1$ of $h$ plus an integer multiple $c_2$  of $\delta$ (with $|c_2| \leq O(n2^n)$).
As a result, $u_\emptyset'$ is of the form $d_1h+d_2\delta$ where $d_1$ and $d_2$ are both integers
  and $\smash{|d_2|=O( n2^{2n})}$.
For $u_\emptyset'$ to be positive, we have either $d_1>0$, in which case $\smash{u_\emptyset'=\Omega(h)}$
  (as $\smash{\delta=1/2^{3n}}$), or $d_1=0$ and $d_2>0$, in which case $u_\emptyset'\ge \delta$.

Now we modify the given solution to obtain a new solution.
We then show that the new solution is also feasible and its expected revenue is strictly higher.
Let $\calL$ denote the set of all subsets $S\subset [n]$ with
  $\sum_{i\in S} h_i<\alpha$, and $\calU$ denote the set of all $S\subseteq [n]$ with $\sum_{i\in S} h_i\ge \alpha$.
We modify the solution as follows: 
\begin{enumerate}
\item[i)] We set $x_{S,i}=0$ and $z_S=0$ for all $S\in \calL$ and $i\notin S$;
\item[ii)] We change $u_\emptyset'$ to $0$ and $u_S'$ to $\sum_{i\in S}h_i$; 
\item[iii)] For each $S\in \calL$ we change $u_S$ to $0$;
 For each $S\in \calU$ we change $u_S$ to $\sum_{i\in S} h_i-\alpha$ (which\\ is nonnegative by
  the definition of $\calU$ and can only go down compared to the old solution\\
  by constraints (4) using $\smash{x_{\emptyset,i}'=1}$ from Lemma \ref{lem:3} and 
  $\smash{z_\emptyset'=1}$ from Lemma \ref{lem:1}).
\end{enumerate}
Every other entry remains the same as in the original solution.

We first show that the new solution is feasible.
Constraints in (2) are trivial.
For (3), if $S\in \calL$, we have $u_S=0$ and $z_S=0$ and thus, the RHS is 
  $-\sum_{i\in S} h_i\le 0=u_\emptyset'$;
  if $S\in \calU$, we have $u_S=\sum_{i\in S}h_i-\alpha$ and thus, the RHS is
  $-\alpha+\alpha\cdot z_S\le 0=u_\emptyset'$.
For constraints in (4), if $S\in \calL$, we have $u_S=0$ and the RHS is 
  $\sum_{i\in S}h_i-\alpha<0=u_S$ (by the definition of $\calL$);
  if $S\in \calU$  the RHS is 
  $\sum_{i\in S}h_i-\alpha=u_S$.~Finally for (5), the constraint is trivial if both $S$ of $T$ are in $\calU$, or both of them are in $\calL$.
The only case left is that $S\in \calU$ and $T\in \calL$ (the other case of $S\in\calL$ and $T\in \calU$
  cannot happen as $T\subset S$), in which case the constraint is trivial since $u_T=0$
  and $x_{T,i}=0$ for all $i\notin T$.
This finishes the  feasibility proof.  
  
Next we bound the net gain of expected revenue in the new solution. 
For this purpose we discuss two cases: a) There exists an $S$ with $|S|<n/2$ (and thus, $S\in\calL$)
  such that $z_S=1$ in the original solution; and b) Every $S$ with $|S|<n/2$ has $z_S=0$ in the 
  original solution.
  
We start with the first case.
The existence of such an $S$ implies that $u_\emptyset'\ge \Omega(h)$ in the original solution by constraint (3).
Note that the utility $u_S$ for every $S$ does not increase and 
  the utility $u_S'$ for every $S$ goes down by at least $\Omega(h)$.
Therefore the expected revenue goes up by at least
\begin{align*}
&\Omega(h)\cdot (\tau-\eps)-\Pr[\emptyset,\sigma]\cdot n-\sum_{\emptyset\ne S\in \calL}(n+\sigma)\cdot \Pr[S,\sigma]\\
&\hspace{1cm}\ge \Omega(h)-(1-\tau+\eps)\cdot n-O(\sigma)\cdot (1-\tau+\eps)\cdot (1-(1-p)^n)>0.
\end{align*}
For the last term, we plug in $1-\tau+\eps=O(\alpha/(\sigma+\alpha))=O(\alpha/\sigma)$ and $(1-p)^n\ge 1-pn$ to have 
$$
O(\sigma)\cdot (1-\tau+\eps)\cdot (1-(1-p)^n)=O\left(\sigma\cdot \frac{\alpha}{\sigma }\cdot pn\right)
=O(n^2)=o(h).
$$
%

For the second case, we use $u_\emptyset'\ge \delta$.
Let $\calL'$ denote the set of $S$ with $|S|<n/2$.
Since $z_S=0$ for all $S\in \calD'$ in the original solution,
  the expected revenue goes up by at least
\begin{align*}
&\delta\cdot (\tau-\eps)-\sum_{S\in \calL'} n\cdot \Pr[S,\sigma]-\sum_{S\in \calL\setminus\calL'} 
  (n+\sigma)\cdot \Pr[S,\sigma]\\
&\hspace{1cm}\ge \Omega(\delta)-(1-\tau+\eps)\cdot n
  -O(\sigma)\cdot (1-\tau+\eps)\cdot \Pr[n/2].
\end{align*}
For the last term, we have $ {O(\sigma)\cdot (1-\tau+\eps)=O(\alpha)=O(n2^{2n})}$ and
$\Pr[n/2]<2^n\cdot p^{n/2}$. As a result, their product is much smaller than $ \delta=1/2^{3n}$
  as $ p=O(1/2^{2n})$.
This finishes the proof.
\end{proof}


Let $\calL$ and $\calU$ be the sets defined in the proof above.
We have the following simple corollary:

\begin{corollary}
In any optimal solution to the relaxed IP, if $z_\emptyset=0$ then $z_S=0$ for all $S\in \calL$.
\end{corollary}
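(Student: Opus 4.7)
The corollary should follow almost immediately by combining Lemma~\ref{lem:5} with constraint (3) of the relaxed IP, so the plan is to chain these together and derive a contradiction from $z_S=1$ for some $S\in\calL$. Concretely, I would argue as follows.

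Assume $z_\emptyset=0$, and suppose for contradiction that $z_S=1$ for some $S\in\calL$. By Lemma~\ref{lem:5}, $u_\emptyset'=0$. By Lemma~\ref{lem:1}, $x_{S,i}=1$ for every $i\in S$, so constraint (3) instantiated at this $S$ reads
\begin{equation*}
u_\emptyset' \;\ge\; u_S \;-\;\sum_{i\in S} h_i\cdot x_{S,i} \;+\; \alpha\cdot z_S
\;=\; u_S \;-\;\sum_{i\in S} h_i \;+\;\alpha.
\end{equation*}
Plugging in $u_\emptyset'=0$ gives $u_S\le \sum_{i\in S}h_i - \alpha$. But $S\in\calL$ means $\sum_{i\in S}h_i<\alpha$, which forces $u_S<0$, contradicting the nonnegativity constraint $u_S\ge 0$ in (1).

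The expected obstacle here is essentially nil: all the real work was already done in Lemma~\ref{lem:5}, where we had to push $u_\emptyset'$ down to $0$ via the shortest-path/feasibility argument. Once $u_\emptyset'=0$ is available and the allocation variables on the high side are pinned to $1$ by Lemma~\ref{lem:1}, constraint (3) leaves no room for $z_S$ to be $1$ on any bundle in $\calL$. So I would present the corollary as a two-line deduction immediately following the proof of Lemma~\ref{lem:5}.
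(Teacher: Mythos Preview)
Your proposal is correct and follows essentially the same route as the paper: both use Lemma~\ref{lem:5} to get $u_\emptyset'=0$, invoke $x_{S,i}=1$ for $i\in S$, and read off from constraint~(3) that $\alpha\cdot z_S\le \sum_{i\in S}h_i<\alpha$ (the paper phrases it directly rather than by contradiction, but the content is identical).
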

\begin{proof}
By constraints (3), for $S\in \calL$:
$
0\ge u_S-\sum_{i\in S}h_i+\alpha\cdot z_S.
$
Thus, $\alpha\cdot z_S\le \sum_{i\in S}h_i<\alpha$. 
\end{proof}

Next we show that if $z_\emptyset=0$ in an optimal solution,
  then $x_{S,i}=0$
  for all $S\in \calL$ and $i\notin S$.

\def\calG{\mathcal{G}} \def\calD{\mathcal{D}}

\begin{lemma}
In an optimal solution, if $z_\emptyset=0$ then $x_{S,i}=0$
  for all $S\in \calL$ and $i\notin S$.
\end{lemma}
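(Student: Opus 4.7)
I would prove this lemma by a modification argument that mirrors the structure of the proof of Lemma \ref{lem:5}. Suppose for contradiction that $x_{T,i}=1$ for some $T\in\calL$ and some $i\notin T$. The plan is to exhibit a feasible modification with strictly higher expected revenue, contradicting optimality.

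The modification I would consider is the natural one: simultaneously set $x_{S,j}=0$ for \emph{every} $S\in\calL$ and every $j\notin S$; set $u_S=0$ for every $S\in\calL$; and set $u_S=\sum_{j\in S}h_j-\alpha$ for every $S\in\calU$. All other variables (including $u_\emptyset'=0$, $u_S'=\sum_{j\in S}h_j$, $z_S'=1$, $x_{\emptyset,i}'=1$, the $\calU$-allocation variables, and $z_S=0$ for $S\in\calL$ from the preceding corollary) remain untouched. Feasibility follows the same bookkeeping as in Lemma \ref{lem:5}: constraints (2) are untouched; constraints (4) are already handled for $\calL$ (where the RHS $\sum h_j-\alpha$ is negative) and untouched for $\calU$; constraints (3) for $S\in\calL$ reduce to $0\ge-\sum_{j\in S}h_j$; and, most importantly, constraints (5) emanating from $T\in\calL$ collapse to $u_{S'}\ge 0$, because both $u_T$ and every $x_{T,j}$ with $j\notin T$ have been zeroed out.

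The revenue change then decomposes as
\begin{align*}
\Delta \;=\; \sum_{S\in\calL}\Pr[S,\sigma]\,\Bigl(u_S^{\text{old}}-\textstyle\sum_{j\notin S}x_{S,j}^{\text{old}}\Bigr) + \sum_{S\in\calU}\Pr[S,\sigma]\,\Bigl(u_S^{\text{old}}-(\textstyle\sum_{j\in S}h_j-\alpha)\Bigr).
\end{align*}
Every summand in the $\calU$ sum is nonnegative by constraint (4), so $\Delta\ge \Delta_\calL:=\sum_{S\in\calL}\Pr[S,\sigma](u_S^{\text{old}}-\sum_{j\notin S}x_{S,j}^{\text{old}})$. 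The crux of the argument is exploiting the hypothesis $x_{T,i}=1$ to force strict positivity of $\Delta$. The key leverage is constraint (5) from $T$: for every $S'\supseteq T\cup\{i\}$, it forces $u_{S'}^{\text{old}}\ge u_T^{\text{old}}+h_i\ge h_i\approx h=2^{2n}$, a quantity that dwarfs the $O(n)$ direct contribution $\sum_{j\notin S'}x_{S',j}^{\text{old}}$. More generally, the same constraint (5) propagates a lower bound $u_{S'}^{\text{old}}\ge \sum_{j\in S'\setminus T}h_j x_{T,j}^{\text{old}}$ for every $S'\supseteq T$, and summing $h_j\Pr[S',\sigma]$ over the relevant $S'$ uses the identity $\sum_{S'\supseteq T\cup\{j\}}\Pr[S']=p^{|T|+1}$ together with the scale separation $h_j\cdot p=\Theta(1)$ versus $np=o(1)$ that underlies Lemmas \ref{lem:3} and \ref{lem:5}.

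The main obstacle will be turning this ``pointwise'' lower-bound propagation into a sharp aggregate bound that dominates the direct loss term by term, since the lower bounds from different choices of $T'\subsetneq S$ in constraint (5) overlap and could double-count if handled naively. I would address this by choosing, for each $S\in\calL$ in the sum, the single most-binding $T'\subsetneq S$ (which exists because in the optimum $u_S^{\text{old}}$ is exactly the maximum of the (5)-lower-bounds), and then reorganizing the sum by this witness $T'$ to obtain a clean inequality $\sum_{S\in\calL}\Pr[S,\sigma]u_S^{\text{old}}\ge \sum_{S\in\calL}\Pr[S,\sigma]\sum_{j\notin S}x_{S,j}^{\text{old}}$, with strict inequality whenever $x_{T,i}=1$ for some $T\in\calL,i\notin T$. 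Combined with the nonnegative $\calU$ contribution, this yields $\Delta>0$ and the desired contradiction.
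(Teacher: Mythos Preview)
Your modification has a genuine gap: it leaves every $z_S$ with $S\in\calU$ untouched, and the revenue you recover purely from lowering utilities is not enough to offset the direct loss from zeroing the $x_{T,k}$'s in the boundary cases $|T|\in\{n/2-1,n/2\}$.

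Concretely, take a single bad pair $(T,k)$ with $|T|=n/2$ and $T\in\calL$. Every $S'\supseteq T\cup\{k\}$ has $|S'|>n/2$ and hence lies in $\calU$, so your $\Delta_\calL$ reduces to the single term $\Pr[T,\sigma]\bigl(u_T^{\text{old}}-1\bigr)$, which can equal $-\Pr[T,\sigma]$. On the $\calU$ side, constraint~(5) only guarantees $u_{T\cup\{k\}}^{\text{old}}\ge h_k$, so the excess over the target $\sum_{j\in T\cup\{k\}}h_j-\alpha$ is merely $\alpha-\sum_{j\in T}h_j$, which for $|T|=n/2$ can be as small as $\delta=2^{-3n}$. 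Even summing over all $S'\supseteq T\cup\{k\}$ (total probability $p^{n/2+1}$) you get a $\calU$-gain of order $\delta\cdot p\cdot\Pr[T,\sigma]$, far short of $\Pr[T,\sigma]$. Thus your $\Delta$ can be strictly negative, and the contradiction does not go through. The witness-reorganization you sketch in the last paragraph cannot rescue this, because for $|T|=n/2$ there are simply no supersets of $T\cup\{k\}$ inside $\calL$ to absorb the loss.

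The paper's proof uses a different lever. It first shows (Claim~\ref{hahaclaim}) that $x_{T,k}=1$ forces, via (5) and then (3), $z_S=0$ for carefully chosen sets $S\in\calU$ of size $n/2$ or $n/2+1$ (the sets $\calE,\calE^\dag$). The modification then flips these $z_S$ from $0$ to $1$, gaining $\sigma\cdot\Pr[S,\sigma]$; since $\sigma=1/p^n$ this dominates the $O(1)\cdot\Pr[T,\sigma]$ loss by many orders of magnitude. Handling the boundary layers $|T|=n/2-1$ and $|T|=n/2$ is exactly where the two extra COMP$^*$ conditions on $B,W$ are invoked, which is another signal that these cases cannot be absorbed by the generic utility-propagation argument you propose.
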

\begin{proof}
Assume for contradiction that we do not have $x_{T,i}=0$ for all sets $T\in \calL$ and $i\notin T$. 
(For reasons that will become clear later we need special treatments for not only 
  those $T$ at level $n/2$ but those at level $(n/2)-1$ as well.)
We use $\calG$ to denote the set of pairs $(T,k)$ such that $|T|<(n/2)-1$, $k\notin T$ and $x_{T,k}=1$;
we use $\calG^*$ to denote the set of pairs $(T,k)$ such that $|T|=(n/2)-1$, $k\notin T$,
  $T\cup\{k\}\in \calU$  and $x_{T,k}=1$;
we use $\smash{\calG^\dag_1}$ to denote the set of pairs $(T,k)$ such that $|T|=(n/2)-1$, $k\notin T$,
  $T\cup\{k\}\in \calL$ (so it follows from the first condition that $B$ and $W$ satisfy that 
  $n\notin T\cup\{k\}$) and $x_{T,k}=1$;
  we use $\smash{\calG^\dag_2}$ to denote the set of pairs $(T,k)$ such that $|T|=n/2$, $T\in \calL$, $k\notin T$ and $x_{T,k}=1$.
Then $\calG\cup \calG^* \cup\smash{\calG^\dag_1\cup \calG^\dag_2}$ is nonempty.
We next use $\calG, \calG^*,\smash{\calG^\dag_1}$ and $\smash{\calG^\dag_2}$ to define two sets $\calE$ and $\smash{\calE^\dag}$, 
  where $\calE$ consists of subsets of size $n/2$ and $\smash{\calE^\dag}$ consists of subsets of size $n/2+1$
  so they are disjoint. 
\begin{flushleft}\begin{enumerate}
\item $S\subset [n]$ is in $\calE$ if $|S|=n/2$, $ S\in \calU$ and $T\cup\{k\}\subseteq S$
  for some pair $(T,k)\in \calG\cup \calG^*$.
\item $S\subset [n]$ is in $\smash{\calE^\dag}$ if $|S|=(n/2)+1$ and satisfies either $S=T\cup\{k\}$ for some 
  pair $(T,k)\in \smash{\calG^\dag_2}$ or $S=T\cup\{k,r\}$ for some pair $(T,k)\in \smash{\calG^\dag_1}$ with
  $r$ being the smallest index not in $T\cup\{k\}$.
\end{enumerate}\end{flushleft}

We need the following simple claim about $z_S$, $S\in \calE\cup\calE^\dag$, in the original solution.

\begin{claim}\label{hahaclaim}
We have $z_S=0$ for every $S\in \calE\cup \calE^\dag$ in the original solution.
\end{claim}
\begin{proof}
For each $S\in \calE$ (with $T\cup \{k\}\subseteq S$ for some $(T,k)\in 
  \calG\cup\calG^*$)  we have from constraint (5): $$u_S\ge u_T+\sum_{i\in S\setminus T} h_i\cdot x_{T,i}
\ge h_k$$ and thus, by constraint (3):$$0=u_\emptyset'\ge u_S-\sum_{i\in S}h_i+\alpha\cdot z_S
\ge-((n/2)-1)h-\sum_{i\in S\setminus\{k\}} a_i+\alpha \cdot z_S, $$
which implies that $z_S=0$  since $\alpha \ge (n/2)h$ and $h\gg a_i$.

For each $S\in \smash{\calE^\dag}$, if $S=T\cup\{k\}$ for some $(T,k)\in \smash{\calG^\dag_2}$, we have $u_S\ge h_k$ and
  by constraint (3): 
$$
0=u_\emptyset'\ge u_{S}-\sum_{i\in S}h_i+\alpha \cdot z_S
\ge -\sum_{i\in T}h_i+\alpha \cdot z_S,
$$
which implies that $z_S=0$ since $T\in \calL$ implies that $\sum_{i\in T}h_i<\alpha $.

Finally if $S\in \smash{\calE^\dag}$ satisfies $S=T\cup\{k,r\}$ for some $(T,k)\in \smash{\calG^\dag_1}$,
  we have $u_S\ge h_k$ and thus,
$$
0=u_\emptyset'\ge u_S-\sum_{i\in S}h_i+\alpha \cdot z_S\ge-\sum_{i\in T\cup\{r\}}h_i+\alpha \cdot z_S.
$$
Note that $n\notin T$ (by the first condition of $B$ and $W$ and the fact that $T\cup\{k\}\in \calL$), $r\ne n$, and $T\cup\{r\}$ contains either $1$ or $2$ (by our choice of $r$, as the smallest index not in $T\cup\{k\}$).
Using the second condition about $B$ and $W$, we have
$\sum_{i\in T\cup\{r\}}h_i<\alpha $ and thus, $z_S=1$.
This finishes the proof.
\end{proof}

Claim \ref{hahaclaim} inspires us to derive a new solution by making the following changes in the old one:
\begin{flushleft}\begin{enumerate}
\item[i)] For all $\smash{(T,k)\in \calG\cup \calG^*\cup \calG^\dag_1 \cup\calG^\dag_2}$, change $x_{T,k}$ from $1$ to $0$ 
  (so that $x_{T,k}=0$ in the new solution for all $T\in \calD$ and $k\notin T$);
\item[ii)] For all $S\in \calE \cup\calE^\dag$, 
  change $z_S$ from $0$ to $1$;
\item[iii)] For all $S\in \calD$, change $u_S$ to $0$; For all $S\in \calU$, change $u_S$ to $\sum_{i\in S}h_i-\alpha $ (note that the new $u_S$ is nonnegative as 
  $S\in \calU$ and can only go down from the original by constraint (4)).
\end{enumerate}\end{flushleft}
All other entries remain the same in the new solution.

We first verify that the new solution is feasible and then show that it is strictly better than the 
  old solution.
For the feasibility, constraints (2) are trivial.
For (3), the constraint is trivial if $S\in \calD$ (since $u_S=z_S=0$);
  if $S\in \calU$, we have $u_S=\sum_{i\in S}h_i-\alpha $ and thus, the RHS is at most $0$.
For (4), the constraint is trivial if $S\in \calD$ (since the RHS is negative);
  if $S\in \calU$, the LHS and RHS are the same.
For (5), the constraint is trivial if $T\in \calD$ (since the RHS is $0$);
  otherwise we have both $S$ and $T$ are in $\calU$ and the constraint follows from
  $u_S=\sum_{i\in S}h_i-\alpha $ and $u_T=\sum_{i\in T}h_i-\alpha $ in the new solution.
This finishes the proof of feasibility of the new solution.

Note that each utility variable in the new solution can only go down
  from that in the old solution.
As a result, the net gain of expected revenue in the new solution is at least
$$
 \sum_{S\in\calE\cup \calE^\dag} \sigma\cdot \Pr[S,\sigma]-\sum_{(T,k)\in \calG\cup\calG^*\cup \calG^\dag_1\cup
  \calG^\dag_2} \Pr[T,\sigma]
.
$$
Ignoring the common factor of $1-\tau+\eps$ and rearranging the terms, we obtain
$$
\left(\sigma\sum_{S\in \calE}\Pr[S]-\sum_{(T,k)\in \calG\cup\calG^*}\Pr[T]\right)+
\left(\sigma\sum_{S\in \calE^\dag}\Pr[S]-\sum_{(T,k)\in \calG^\dag_1\cup\calG^\dag_2}\Pr[T]\right).
$$
Below we show that the first term is positive if $\calG\cup\calG^*$ is nonempty and the second term
  is positive if $\smash{\calG^\dag_1\cup\calG^\dag_2}$ is nonempty. 
This shows that the expected revenue goes up strictly in the new solution.

We start with the second term which is easier. By the definition of $\smash{\calE^\dag}$ 
  from $\smash{\calG^\dag_1}$ and $\smash{\calG^\dag_2}$,
  we have $$|\calG^\dag_2|\le ((n/2)+1)\cdot |\calE^\dag|\quad \text{and}\quad |\calG^\dag_1|\le ((n/2)+1)\cdot (n/2)\cdot
    |\calE^\dag|.$$
Thus, the second term is at least
\begin{align*}
\sigma\cdot |\calE^\dag|\cdot p^{(n/2)+1} (1-p)^{(n/2)-1}
-n\cdot |\calE^\dag|\cdot p^{n/2} (1-p)^{n/2}-n^2\cdot |\calE^\dag|\cdot p^{(n/2)-1} (1-p)^{(n/2)+1}>0 
\end{align*}
when $\calE^\dag\ne \emptyset$ since $p^2\sigma\gg pn+n^2$.
This finishes the proof, as $\calE^\dag\ne \emptyset$ whenever $\smash{\calG^\dag_1\cup\calG^\dag_2\ne\emptyset}$.

For the first term, note that $|\calG^*|\le (n/2)\cdot |\calE|$ and thus,
$$
\sum_{(T,k)\in \calG^*}\Pr[T]\le (n/2)\cdot |\calE|\cdot p^{(n/2)-1} (1-p)^{(n/2)+1}.
$$
To understand the sum over $(T,k)\in \calG$, we
 decompose $\calG$ into $\calG_0,\ldots,\calG_{(n/2)-2}$ where $\calG_\ell$   
  contains all the pairs $(T,k)\in \calG$ with $|T|=\ell$.
Fixing an $\ell$,
  we examine the following bipartite graph between $\calG_\ell$ and $\calE$: $(T,k)\in \calG_\ell$ and 
  $S\in \calE$ (and thus, $S\in \calU$) are connected if $ T\cup\{k\}\subseteq S$.
We claim that every $(T,k)\in \calG_\ell$ has at least one edge (it has a lot more 
   but one is enough for our purpose).
    
To see this, we consider two cases: $n\in T\cup\{k\}$ or $n\notin T\cup\{k\}$.  
For the first case, from the first condition on $B$ and $W$, every set $S$ of size $n/2$ that 
  contains $T\cup\{k\}$ is in $\calE$ and is connected with $(T,k)$.
For the second case, every set $S$ of size $n/2$ that contains $T\cup\{k,n\}$ is in $\calE$ and is connected 
  with $(T,k)$.
(Here we used the fact that $\ell\le n/2-2$ so $T\cup\{k,n\}$ has size no more than $n/2$, which is why we treated sets of size
  $n/2-1$ differently throughout the proof.)

As a result, the number of edges between $\calG_\ell$ and $\calE$ is at least
$
|\calG_\ell|
$
but is at most
$$
|\calE|\cdot \binom{n/2}{\ell}\cdot (n/2-\ell).
$$
Therefore, we have 
$$
|\calE|\ge \frac{|\calG_\ell|}{\binom{n/2}{\ell}\cdot (n/2-\ell)}
> |\calG_\ell|\cdot \frac{1}{n2^n}.
$$
This implies that 
\begin{align*}
\frac{\sum_{S\in \calE} \Pr[S]}{\sum_{(T,k)\in\calG_\ell}\Pr[T]}
&=\frac{|\calE|\cdot p^{n/2}\cdot (1-p)^{n/2}}{|\calG_\ell|\cdot p^\ell\cdot (1-p)^{n-\ell}} 
 > \frac{p^{n/2}}{2p^\ell}\cdot \frac{1}{n2^n}=\frac{p^{n/2}}{2n2^n p^\ell}.
\end{align*}
Therefore, we have
$$
 \sum_{(T,k)\in\calG}\Pr[T]=\sum_{\ell=0}^{n/2-2} \sum_{(T,k)\in \calG_\ell} \Pr[T]< \left(\frac{2n2^n}{p^{n/2}}\cdot \sum_{S\in \calE} \Pr[S]\right)\cdot \sum_{\ell=0}^{n/2-2}p^\ell<
 \frac{4n2^{n}}{p^{n/2}} \cdot\sum_{S\in \calE} \Pr[S] . 
$$
Plugging in our choice of $\sigma=1/p^n$, we have
$$
\sigma\sum_{S\in \calE}\Pr[S]-\sum_{(T,k)\in \calG\cup\calG^*}\Pr[T]
\ge \left(\sigma-\frac{4n2^n}{p^{n/2}}-O(n/p)\right)\cdot 
\sum_{S\in \calE}\Pr[S]>0
$$
when $\calE\ne\emptyset$. 
Our argument above also shows that $\calE\ne\emptyset$ whenever $\calG\cup \calG^*\ne\emptyset$.
\end{proof}

Finally we 
  prove the second part of Lemma \ref{lem:main}.
Lemma \ref{lem:main} follows from Lemma \ref{lem:4} and \ref{hehe2}.
\begin{lemma}\label{hehe2}
In an optimal solution to the relaxed IP,
  if $z_\emptyset=0$, then it must be Solution 2.
\end{lemma}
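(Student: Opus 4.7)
The plan is to note that, once $z_\emptyset=0$ is assumed, the preceding lemmas already determine all allocation variables outside of $\calU$, and the relaxed IP's constraints force tight lower bounds on the remaining utility variables. A term-by-term monotone comparison against Solution 2 (which is feasible for the relaxed IP because it is feasible for the standard IP) will then force any optimum to coincide with Solution 2.

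First I would gather what is already determined under $z_\emptyset=0$: by Lemmas~\ref{lem:1}, \ref{lem:3}, \ref{lem:5} and the preceding corollary and lemma, every primed allocation variable equals $1$, $u'_\emptyset=0$, $u'_S=\sum_{i\in S}h_i$ for $S\ne\emptyset$, and for every $S\in\calL$ we have $z_S=0$, $x_{S,i}=1$ for $i\in S$, and $x_{S,i}=0$ for $i\notin S$ (in particular $x_{\emptyset,i}=0$ for all $i$). Substituting into constraint (3) with $S=\emptyset$ yields $u_\emptyset\le u'_\emptyset=0$, hence $u_\emptyset=0$.

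Next I would record the lower bounds on the remaining utilities. For $S\in\calL\setminus\{\emptyset\}$, constraint (5) with each $T\subsetneq S$ (all in $\calL$) collapses to $u_S\ge u_T$ since $x_{T,i}=0$ for $i\notin T$, and induction on $|S|$ starting from $u_\emptyset=0$ yields $u_S\ge 0$; constraint (4) is vacuous here because $\sum_{i\in S}h_i-\alpha<0$. For $S\in\calU$, constraint (4) with $x'_{\emptyset,i}=z'_\emptyset=1$ gives $u_S\ge\sum_{i\in S}h_i-\alpha\ge 0$.

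Finally I would compare the current optimal solution with Solution 2 term by term in the revenue objective. The two coincide on every variable associated with $(S,\sigma+\alpha)$. For each $(S,\sigma)$ with $S\in\calL$ the allocation variables already match Solution 2, while the current solution has $u_S\ge 0$ and Solution 2 has $u_S=0$. For $S\in\calU$, Solution 2 sets every $x_{S,i}$ and $z_S$ to the maximum value $1$ and $u_S$ to the constraint-(4) lower bound $\sum_{i\in S}h_i-\alpha$, whereas the current solution has $x_{S,i},z_S\le 1$ and $u_S$ at least this same lower bound. Since every allocation variable enters the objective with a nonnegative coefficient and each $u_S$ with a strictly negative coefficient, Solution 2's contribution weakly dominates the current solution's in every term. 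Summing yields $\Rev(\text{current})\le\Rev(\text{Solution 2})$, and optimality of the current solution forces equality in every term, so it must coincide with Solution 2. The main subtlety I anticipate is the second step: one has to carefully verify by induction that no constraint of the relaxed IP, after the earlier pinned-down values are substituted, forces $u_S$ above $0$ for any $S\in\calL\setminus\{\emptyset\}$. Once that is in hand, the monotonicity comparison is routine because all the sign conditions align.
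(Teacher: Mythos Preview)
Your proposal is correct and follows essentially the same route as the paper: show that every utility variable in the optimal solution is at least its value in Solution~2 and every allocation variable is at most its value in Solution~2, then conclude by optimality and the feasibility of Solution~2. The induction you set up for $u_S\ge 0$ when $S\in\calL$ and the ``subtlety'' you flag at the end are both unnecessary: $u_S\ge 0$ is already constraint~(1), and since you have already noted that Solution~2 is feasible for the relaxed IP, no constraint can force $u_S>0$ for $S\in\calL$.
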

\begin{proof}
We show that, when $z_\emptyset=0$, every utility variable in the optimal solution is at least as large as that in Solution 2,
  and every allocation variable is at most as large as that in Solution 2. So
  for it to be optimal, it must be exactly the same as Solution 2.

For utilities we first note that $u_S'$ is the same in both solutions for all $S\subseteq [n]$ since
  in both we have $u_\emptyset'=0$ and $\smash{u_S'=\sum_{i\in S}h_i}$.
Next for each $S\in \calD$, we have $u_S=0$ in Solution 2.
Finally, for each $S\in \calU$ we have from constraint (4) that $u_S\ge u_\emptyset'+\sum_{i\in S}h_i-\alpha
  =\sum_{i\in S}h_i-\alpha$ in the optimal solution, which is at least as large as $u_S$ in Solution 2.
For allocation variables, we first have $\smash{x_{S,i}'=z_S'=1}$ in both solutions for all
  $S\subseteq [n]$ and $i\in [n]$.
Next we have $x_{S,i}=1$ in both solutions for all $S\subseteq[n]$ and $i\in S$.
For each $S\in \calD$, we have $x_{S,i}=z_S=0$ in both solutions for all $i\notin S$.
Finally for each $S\in \calU$, we have $x_{S,i}=z_S=1$ in Solution 2 for all $i\notin S$.
This finishes the proof of the lemma.
\end{proof}

\def\calR{\mathcal{R}} \def\calL{\mathcal{L}}


\subsection{Finishing the reduction from \texorpdfstring{COMP$^*$}{COMP*}}\label{finish}


Finally, we show that with an appropriate choice of $\eps$ 
  (with $|\eps|=o(1/\sigma)$ as promised) that can be computed in polynomial time,
  we have 1) Solution 2 is strictly better than Solution 1 if $(B,W,t)$ is a yes-instance of
  COMP$^*$; and 2) Solution 1 is strictly better than Solution 2 if it is a no-instance.

The expected revenue of Solution 1 is $\rev_1 = n+\sigma$.
Let $t^*$ be the number of $(n/2)$-sets $S\subset[n]$ with $\sum_{i\in S}h_i\ge \alpha$
  (recall that $(B,W,t)$ is a yes-instance if $t^*\ge t$ and is a no-instance if $t^*\le t-1$)  
  and let $\calR$ be the set of all sets $S\subset[n]$ with $\sum_{i\in S} h_i<\alpha$.
Then $\rev_2$ of Solution 2 is
\begin{align*}
\rev_2&= (n+\sigma+\alpha)\left((\tau -\epsilon) + (1-\tau+\epsilon) \left( \Pr [i>n/2]
  +t^*\cdot p^{n/2}\cdot (1-p)^{n/2}\right)\right)\\
&\hspace{1cm}+(1-\tau+\epsilon) \sum_{S\in \calR} \Pr[S] \sum_{i\in S} (h_i+1).
\end{align*}
Below we use $a=b\pm c$ to denote $|a-b|\le c$.
The sum $\sum_{S\in \calR}\Pr[S]\cdot \sum_{i\in S} (h_i+1)$ is equal to $$
\sum_{i\in [n]}(h_i+1)p-\sum_{S\notin \calR} \Pr[S] \sum_{i\in S} (h_i+1)
=\sum_{i\in [n]}(h_i+1)p\pm O(2^np^{n/2}nh).
$$
Then $\rev_2$ is equal to $B-A\eps$, where
\begin{align*}
 A&= (n+\sigma+\alpha)\left(1-\Pr[i> n/2]-t^*\cdot p^{n/2}\cdot (1-p)^{n/2}\right)-\sum_{S\in \calR}\Pr[S]
\sum_{i\in S}(h_i+1)\\[-0.6ex]
& =\left((n+\sigma+\alpha)\big(1-\Pr[i>n/2]\big)-\sum_{i\in [n]}(h_i+1)p\right)\pm
  O(\sigma2^np^{n/2})\\
&=A'\pm O(\sigma2^np^{n/2}),
\end{align*}
where $A'=\Theta(\sigma)$ is a number that we can compute efficiently.
On the other hand,  we note for $B$ that $\Pr[i>n/2]\le 2^np^{n/2+1}$ and $(n+\sigma+\alpha)\cdot (1-\tau)=O(\alpha)$. As a result, we have
\begin{align*}
B&=(n+\sigma+\alpha)\left(\tau+(1-\tau)\cdot t^*\cdot p^{n/2}\cdot (1-p)^{n/2} \right)
  +(1-\tau)\sum_{i\in [n]} (h_i+1)p\pm O(\alpha2^np^{n/2+1}).
\end{align*}
For convenience we write $B$ as $B=B'+C'\cdot t^*\pm O(\alpha2^np^{n/2+1})$, where 
$$
B'=(n+\sigma+\alpha )\tau+(1-\tau)\sum_{i\in [n]} (h_i+1)p\quad\text{and}\quad
C'=(n+\sigma+\alpha)\cdot(1-\tau)\cdot p^{n/2}\cdot(1-p)^{n/2}
$$
can also be computed efficiently.
Plugging in $\tau=\sigma/(\sigma+\alpha)$ and $h_i=h+a_i$, we have
\begin{align*}
B'&=(n+\sigma+\alpha)\tau+(1-\tau)n(h+1)p+(1-\tau)p\sum_{i\in [n]} a_i\\[0.3ex]
&=\sigma+\frac{n\sigma}{\sigma+\alpha}+\frac{\alpha}{\sigma+\alpha}\cdot n(h+1)\cdot \frac{1}{2(h+1)}+ (1-\tau)p\sum_{i\in [n]} a_i\\[1ex]
&=\sigma+n-\frac{\alpha n }{2(\sigma+\alpha)}+(1-\tau)p\sum_{i\in [n]} a_i.
\end{align*}

Finally we choose $\eps$ to be (recall $t$ is between $1$ and $2^n$; otherwise the problem is trivial)
$$
\eps=\frac{1}{A'}\cdot \left(C'\big(t-(1/2)\big)-\frac{\alpha n }{2(\sigma+\alpha)}+(1-\tau)p\sum_{i\in [n]} a_i\right),
$$
which can be computed efficiently and satisfies (using $|A'|=\Theta(\sigma)$)
  $$|\eps|\le (1/A')\cdot \big(O(\alpha p^{n/2}2^n)+O(\alpha n/\sigma)\big)=O(\alpha p^{n/2}2^n/\sigma)=o(1/\sigma).$$

Plugging in our choice of $\eps$, we have 
$ 
\rev_2-\rev_1=C'(t^*-t+1/2)\pm O(\alpha2^np^{n/2+1}).
$ 
Note that $C'=\Omega(\alpha p^{n/2})\gg O(\alpha 2^n p^{n/2+1})$.
If $t^*\ge t$, Solution 2 is strictly better than Solution 1;
  if $t^*\le t-1$, Solution 1 is strictly better.  
This finishes the proof of Theorem \ref{thm:main}.


\def\calW{\mathcal{W}}
\def\supp{\textsc{support}}
\def\b{B}
\def\pr{\text{Pr}}

\section{Constant Number of Items}

In this section we prove Theorem \ref{thm:constant}. Let $\calF=\calF_1\times\cdots\times\calF_k$ be an instance of bundle-pricing for~some constant number of items $k$ and  assume without loss of generality that $|\supp(\calF_i)|=m$ for all \mbox{$i\in[k]$.} In this case, there are $m^k$ possible valuation vectors (a polynomial number), and $d=2^k$ possible distinct bundles (a constant number).
{The standard IP in this case has a polynomial number of variables and constraints. However, Integer Programming is NP-hard, so we will use a different method to solve the problem in polynomial time.} For the rest of this section, we assume two arbitrary orderings, one for the valuation vectors and one for the bundles, and we will use $v_i$ to denote the $i$th valuation vector, and $\b(j)$ to denote the $j$th bundle and $p_j$ to denote its price.

We will argue that we can generate in polynomial time a set of price vectors $p$ that includes an optimal one; we can then compute the expected revenue for each of these vectors and pick the best one.
To this end, we consider a partitioning of the $d$-dimensional
space of possible price vectors $p$ into cells, such that for all $p$ in the same cell, the buyer has the same behavior for every $v_i$, i.e., buys the same bundle, if any. Consider the following set $H$ of hyperplanes over $p$. 
\begin{flushleft}\begin{enumerate}
\item For each valuation $v_i$ and bundle $\b(j)$, the set $H$
includes $\sum_{\ell\in \b(j)} v_{i,\ell}-p_j =0$.
(If the price vector $p$ is below the hyperplane, the buyer will not consider bundle $\b(j)$ for valuation $v_i$.)\\
\item For each valuation $v_i$ and each pair of bundles $\b(j)$ and $\b(j')$,
the set $H$ includes the hyperplane 
$\smash{\sum_{\ell\in \b(j)} v_{i,\ell}-p_j = \sum_{\ell\in \b(j')} v_{i,\ell}-p_{j'}}$.
Note that for $v_i$, the buyer prefers $\b(j)$ to $\b(j')$ if $p$ is on one side of the hyperplane, she prefers $\b(j')$ to $\b(j)$ if $p$ is
on the other side, and if $p$ lies on the hyperplane itself then
it depends on the order between the prices $p_j$, $p_{j'}$.\\
\item For each pair of bundles $\b(j)$ and $\b(j')$, the set $H$ includes
the hyperplane $p_j = p_{j'}$.\\
\end{enumerate}\end{flushleft}

These hyperplanes partition the space of prices into
cells, where a cell consists of all price vectors that
have the same relation to each of these hyperplanes, i.e., lie in
the same open half-space or on the hyperplane.
We can assume that for a valuation $v_i$ and price vector $p$, if there is a tie both in utility and in the price between some
bundles, then the buyer selects a bundle according to some fixed tie-breaking rule, for example she chooses among the tied bundles the one with the smallest index (the rule does not matter for the revenue). 
It follows from the definition of the set $H$ of hyperplanes, that for every cell $C$ and every valuation $v_i$ there is $k_i\in[d]$ such that the buyer selects the same bundle $\b(k_i)$ for every price vector $p$ in $C$, or buys no bundle
(if they all have negative utility).
Let $V_C(j)$ be the set of valuations for which the buyer selects bundle $\b(j)$ if the price vector $p \in C$,
and let $Q_{C}(j) = \pr[ V_C(j)]= \sum\{ \Pr[v_i] | v_i \in V_C(j) \}$ be the probability that the buyer selects bundle $\b(j)$.
The supremum revenue that the seller can extract for a price vector 
$p$ in the cell $C$, can be computed by solving the LP of maximizing
$\sum_j Q_{C}(j) \cdot p_j$, subject to $p$ belonging to the closure of
the cell $C$, i.e., $p$ satisfying all the weak inequalities corresponding to the bounding hyperplanes of $C$.
By LP theory, the maximum value of the LP is achieved at some vertex; even if the vertex does not belong
to $C$ but is in the closure, the corresponding price vector
achieves this expected revenue (by the maximum price tie breaking rule).
The maximum over all cells $C$ gives the supremum revenue that
can be achieved by any price vector. Thus, the supremum revenue is
achieved at some vertex,
i.e., at the intersection of some $d$ hyperplanes of the set $H$.

Therefore, we can compute an optimal solution 
by generating all vertices and picking the best one.
For every subset of $d=2^k$ hyperplanes of $H$, solve the corresponding linear system of equations to check if the hyperplanes
intersect at a unique point $p$, and if $p$ is nonnegative
(if a price is negative then $p$ cannot be optimal).
If so, compute the expected revenue of $p$ by examining each valuation $v_i$ and determining the bundle selected for $v_i$, if any.
Choose among these  price vectors $p$ the one that yields the maximum revenue.
The set $H$ has a polynomial number of hyperplanes, and the dimension
$d=2^k$ is constant, so we will consider a polynomial number of
subsets to generate the set of price vectors $p$. 
Since the number of valuations is polynomial, it takes polynomial time to compute the revenue of each vector $p$.
Hence the total time is polynomial. 
This finishes the proof of Theorem \ref{thm:constant}.\vspace{-0.1cm}
\def\rev{\textsc{Rev}}
\section{Conclusions}

In this work, we studied the optimal bundle-pricing problem {(or equivalently, the Revenue-Optimal Deterministic Mechanism Design} problem). We showed that the problem is intractable (\#P-hard) even when the (independent) item distributions have support size 2 and the optimal solution has a very simple form of discounted item-pricing: the seller prices the individual items and offers also the grand bundle at a (possibly) discounted price. Another consequence of the results is that there is no `nice' (easy-to-check) characterization of when separate item pricing, or grand bundling extracts the maximum revenue $\drev(\calF)$ achievable by any bundle-pricing. On the positive side, we showed that for i.i.d. distributions with support size 2,~the maximum revenue $\rev(\calF)$ achievable by any lottery pricing can always be achieved by a discounted item-pricing, and we can compute 
  it in polynomial time. The problem can be also solved in polynomial time for a constant number of items.

A number of interesting problems present themselves. First, we know from Babaioff et al. \cite{Bab14} that discounted item-pricing always achieves a constant fraction (at least 1/6th) of the maximum revenue; what is the constant that can always be guaranteed with respect to the deterministic and randomized maximum revenue? Second, we know that we can compute efficiently an optimal item pricing, and it can be shown that we can also compute an $(1-\epsilon)$-approximately optimal grand bundle price; can we compute efficiently an $(1-\epsilon)$-approximately optimal discounted item pricing? 
(We believe this is the case.) 
Third, besides extending simple item-pricing with the grand bundle,
it is more generally natural to offer discounts on
disjoint groups of items, as in partition mechanisms.
How powerful are such partitioned discounted item-pricings, and
can we compute efficiently an $(1-\epsilon)$-approximately optimal solution of this type?
Finally, regarding i.i.d. distributions, we know that randomization can increase  the revenue for support size 3 in some cases (an example is given by Hart and Nisan \cite{HN12}). Are simple schemes able to extract (approximately) the maximum revenue $\drev(\calF)$ achievable by any bundle pricing for general i.i.d. distributions?

%
%
%

\bibliographystyle{amsalpha}
\bibliography{refs}
%

\end{document}